\algnewcommand\algorithmiccase{\textbf{case}}
\newtheorem{example}{Example}
\newtheorem{theorem}{Theorem}
\newtheorem{definition}{Definition}
\newtheorem{proposition}{Proposition}
\newtheorem{corollary}{Corollary}
\newtheorem{remark}{Remark}
\newtheorem{lemma}{Lemma}
\title{First-Order Coalition Logic\thanks{This is an extended version of the paper with the same title that appears in the proceedings of IJCAI
2025. This version contains a technical appendix with proof details that, for space reasons, do not appear in the IJCAI 2025 version.}}
\author{
Davide Catta$^1$
\and
Rustam Galimullin$^2$\And
Aniello Murano$^{3}$\\
\affiliations
$^1$LIPN, CNRS UMR 7030, Université Sorbonne Paris Nord,
  Villetaneuse, France\\
$^2$University of Bergen, Norway\\
$^3$University of Naples Federico II, Italy\\
\emails
catta@lipn.univ-paris13.fr, rustam.galimullin@uib.no, 
aniello.murano@unina.it 
%third@other.example.com,
}
\begin{document}

\maketitle

\begin{abstract}
We introduce \textit{First-Order Coalition Logic} ($\CSL$), which combines key intuitions behind Coalition Logic ($\mathsf{CL}$) and Strategy Logic ($\mathsf{SL}$). Specifically, $\CSL$ allows for arbitrary quantification over actions of 
%groups of 
agents.  %The motivation behind $\CSL$ is two-fold. 
$\CSL$ is interesting for several reasons. First, we show that $\CSL$ is strictly more expressive than existing coalition logics. %, and then we study its complexity. 
Second, we provide a sound and complete axiomatisation of $\CSL$, which, to the best of our knowledge, is %represents 
\textit{the first axiomatisation} of any variant of $\mathsf{SL}$ in the literature. %Hence, our work serves as a foundational step towards providing axiomatisations of various strategy logics --- a research question that has been open since the inception of the field.   
Finally, while discussing the satisfiability problem for $\CSL$, we reopen the question of the recursive axiomatisability of $\mathsf{SL}$.
\end{abstract}

\section{Introduction}
\textit{Logics for strategic reasoning} constitute a numerous family of formal tools devised to model, verify, and reason about the abilities and strategies of (groups of) autonomous agents in a competitive environment~\cite{pauly02,alur02,van2005logic,mogavero14,chatterjee2010strategy}. Strategies here are ‘recipes’ telling agents what to do in order to achieve their goals. The competitive environment part arises from the fact that in the presence of several agents trying to achieve their own goals, the actions of one agent may influence the available strategies of another agent. Such logics have been shown to be invaluable for specification and verification within various domains: neuro-symbolic reasoning \cite{akintunde20}, voting protocols \cite{jamroga18}, autonomous submarines \cite{ezekiel11}, manufacturing robots \cite{desilva17}, and so on. 

The prime representatives of logics for strategic reasoning are \textit{coalition logic} ($\mathsf{CL}$) \cite{pauly02}, \textit{alternating-time temporal logic} ($\mathsf{ATL}$), \cite{alur02}, and \textit{strategy logic} ($\mathsf{SL}$) \cite{mogavero10} (and numerous variations thereof). $\mathsf{CL}$ extends the language of propositional logic with constructs $\langle \! \langle C \rangle \! \rangle \varphi$ meaning `coalition $C$ has a joint action such that $\varphi$ holds in the next state (no matter what agents outside of the coalition do at the same time)'. $\mathsf{ATL}$ extends further the abilities of agents to force temporal goals expressed with the help of such modalities as `\textsf{U}ntil' and `\textsf{R}elease'. Finally, $\mathsf{SL}$ allows for a more fine-tuned quantification over agents' abilities: while in both $\mathsf{CL}$ and $\mathsf{ATL}$ we have a fixed quantification prefix $\exists \forall$, in $\mathsf{SL}$ we can have arbitrary quantification prefixes. Thus, in $\mathsf{SL}$ we can reason, for example, about agents sharing their strategies, and such game-theoretic notions like dominant strategies and Nash equilibria.  Hence, $\mathsf{ATL}$ is strictly more expressive than $\mathsf{CL}$, and, in turn, $\mathsf{SL}$ is strictly more expressive than $\mathsf{ATL}$ (and its more general cousin $\mathsf{ATL}^\ast$).

%The incredible expressivity of $\mathsf{SL}$ comes at a price: the complexity of the model checking for $\mathsf{SL}$ is non-elementary, and it remains quite high for different fragments as well \cite{mogavero14}. Moreover, the satisfiability problem for  $\mathsf{SL}$ is undecidable.

%The satisfiability problem for $\mathsf{SL}$ is $\Sigma^1_1$-hard \cite{mogavero16}. The latter in particular entails that the full language of $\mathsf{SL}$ is not finitely, or even recursively, axiomatisable.

Sound and complete axiomatisations of $\mathsf{CL}$ \cite{pauly02,goranko13} and $\mathsf{ATL}$ \cite{goranko06} are now classic results in the field. However, to the best of our knowledge, \textit{no axiomatisations of $\mathsf{SL}$, nor any of its variants, have been considered in the literature so far}. 

In this paper, we introduce a novel variation of the next-time fragment of $\mathsf{SL}$ that we call \textit{first-order coalition logic} ($\CSL$)\footnote{Not to be confused with \textit{quantified coalition logic} \cite{agotnes08}, where quantification is over coalitions and which is as expressive as $\mathsf{CL}$.}. As its name suggests, $\CSL$ combines the coalition reasoning capabilities of $\mathsf{CL}$ with the first-order features of $\mathsf{SL}$. %In terms of the former, we restrict ourselves to quantification over agent's actions, rather than strategies, and in terms of the latter, we allow arbitrary quantification prefixes. 
Specifically, we allow arbitrary quantification prefixes over agents' actions and also allow action labels to appear explicitly in the language. This makes $\CSL$ to be closely related to $\mathsf{ATL}$ \textit{with explicit strategies} \cite{walther07}. %, and to $\mathsf{SL}$ \textit{with simple goals} \cite{belardinelli19}.  

%We deem our contribution as two-fold. 
We first show that %the introduced 
$\CSL$ is  quite a special $\mathsf{CL}$, being strictly more expressive than other known coalition logics. With such a remarkable expressivity comes the \textit{PSPACE}-complete model checking problem and the undecidable satisfiability problem. While proving the undecidability result, we have also reopened the problem of the recursive axiomatisability of $\mathsf{SL}$, which was until now assumed to be not recursively axiomatisable \cite{mogavero10}. Moreover, we provide a sound and complete axiomatisation of $\CSL$, which is, as far as we can tell, \textit{the first axiomatisation of any variant of $\mathsf{SL}$}. %With our contribution, 
Thus,
we lay the groundwork for the  axiomatisations of more expressive fragments and variants of $\mathsf{SL}$.  

The rest of the paper is structured as follows: Section \ref{sec:csl} defines the syntax and semantics of $\CSL$, Section \ref{sec:expressivity} examines its expressiveness, Section \ref{sec:axiom} presents a complete axiomatisation of $\CSL$, Section \ref{sec:mc} addresses complexity, and Section \ref{sec:conclusion} concludes with directions for future work.
%
%The rest of the paper is organised as follows. In Section \ref{sec:csl} we present syntax and semantics of $\CSL$. In Section \ref{sec:expressivity}, we situate $\CSL$ in the greater landscape of coalition logics and argue that it is more expressive than any other logic mentioned in the section.  A complete axiomatisation of $\CSL$ %, as well as the 
%corresponding completeness proof, %completeness proof based on the corresponding proof for first-order modal logic with constant domains (see, e.g., \cite{Garson1984}), 
%is presented in Section \ref{sec:axiom}. 
%In Section \ref{sec:mc}, we explore the complexity of $\CSL$. We conclude and point out further research directions in Section \ref{sec:conclusion}.

\section{Syntax and Semantics}
%We here introduce the syntax and semantics of our logic. In the following, we will use the adjective "countable" in its standard mathematical sense, that is: a set is countable iff it is either finite or in bijection with $\mathbb{N}$.

%In what follows, we use the word "countable" in its standard mathematical sense, that is: a set is countable if and only if it is either finite or in bijection with $\mathbb{N}$.

\label{sec:csl}
\begin{definition}[Language] 

A \emph{signature} is a triple $\alpha= \tuple{n,\mathcal{C},\Ap}$, where $n\geq 1$ is a natural number,  $\mathcal{C}$ is a non-empty  countable %\footnote{We use the word ``countable" in its standard mathematical sense, that is: a set is countable if and only if it is either finite or in bijection with $\mathbb{N}$.} 
set of \emph{constants}, and $\Ap$ is a non-empty countable set of \emph{atomic propositions} (or \emph{atoms}) %that we suppose disjoint from the set of constants. 
such that $\Ap \cap \mathcal{C} = \emptyset$.

Fix %once and for all 
a non-empty countable set  $\V$   %an at most countable set 
of \emph{variables} that is disjoint from any other set in any given signature $\alpha$. 
%Formulae of  $\CSL$ over $\alpha$  are defined by the following grammar:
The \emph{language of first-order coalition logic} ($\CSL$) is defined as %by the following grammar:
    \[\varphi := p \mid \neg \varphi \mid \ (\varphi \land \varphi) \mid \assign{t_1,..., t_n} \varphi  \mid \forall x \varphi\]
where $p \in \Ap$, $t_i \in \mathcal{C}\cup\V$, $x \in \V$, and all the usual abbreviations of propositional logic (such as $\vee$, $\to$, $\leftrightarrow$) and conventions for deleting parentheses hold. The existential quantifier $\exists x \varphi$ is defined as %$\exists x \varphi \equiv \lnot \forall x \lnot \varphi$. 
$\lnot \forall x \lnot \varphi$.
Formula $\assign{t_1,...,t_n} \varphi$ is read as `after the agents execute actions assigned to $t_1\cdots t_n$, $\varphi$ is true', and $\forall x \varphi$ is read as `for all actions $x$, $\varphi$ holds'. Given a formula $\varphi \in \CSL$, the \emph{size of} $\varphi$, denoted by $|\varphi|$, is the number of symbols in $\varphi$. 

\end{definition}

%\rustam{We use both $\assign{t_1\cdots t_n}$ and $(\!(t_1, ..., t_n ) \! )$. Need to decide which ONE we will use. Nello: I vote for the second one.}

%\noindent where $p$ is any atomic proposition, each of the $t_i$ is a term, and $x$ is a variable. We define the boolean connectives $\land$ and $\to$ and the universal quantifier $\forall$ as usual. 

\begin{definition}[Free Variables]
      Given a formula $\varphi$, we define its \emph{set of free variables} $\FV(\varphi)$ by the following cases: 

    \begin{enumerate}
        
        \item If $\varphi\in \Ap$, then $\FV(\varphi)=\emptyset$; 
        \item If $\varphi=\neg \varphi_1$, then $\FV(\varphi)=\FV(\varphi_1)$; 
        \item If $\varphi=\varphi_1 \land \varphi_2$, then $\FV(\varphi)=\FV(\varphi_1)\cup \FV(\varphi_2)$; 
        \item if $\varphi=\assign{t_1,..., t_n }\varphi_1$, then 
       $\FV(\varphi)=\FV(\varphi_1)\cup \set{t_i \mid t_i\in \V} $;
        
     \item if $\varphi= \forall x \varphi_1$, then $\FV(\varphi)=\FV(\varphi_1)\setminus \set{x}$.

    \end{enumerate}

    \noindent A formula $\varphi$ such that $\FV(\varphi)=\emptyset$ is called a \emph{closed formula}, or a \emph{sentence}. 
\end{definition}

\begin{definition}[Kripke Frame]
    A \emph{Kripke frame} is a tuple $\mathcal{F}=\tuple{\Sigma,S,R}$, where $\Sigma$ is a non-empty countable alphabet, $S$ is a non-empty set of states  %(that we suppose disjoint from $\Sigma$), 
    s.t. $\Sigma \cap S = \emptyset$,
    and  $R\subseteq S\times \Sigma \times S$ is a ternary relation, dubbed \emph{transition relation}.  $\mathcal{F}$ is 
    \emph{serial} %whenever $R$ is, that is: 
        if for every $s\in S$ and %for every 
        $a\in \Sigma$, there is a  $t\in S$ s.t. $\tuple{s,a,t}\in R$. $\mathcal{F}$ is \emph{functional} %whenever $R$ is, hat is : 
        whenever for all $s,t,v\in S$ and for every $a\in \Sigma$, if $\tuple{s,a,t}\in R$ and $\tuple{s,a,v}\in R$, then $t=v$. 
    
\end{definition}

\begin{definition}[Concurrent Game Structure]
    A \emph{game frame} is a tuple $\mathcal{G}=\tuple{n,\Ac, \mathcal{D}, S,R }$ with triple $\tuple{\mathcal{D}, S, R}$ being a serial and functional Kripke frame, where:  $n$ is a positive natural number and  $\mathcal{D}$ is a set of tuples of elements of $\Ac$ of length $n$ (elements of this set will be called \emph{decisions}).  %$R\subseteq S \times \mathcal{D}\times S$ is a ternary relation. 
      
    %Such that the triple $\tuple{\mathcal{D}, S, R}$ is a serial and functional Kripke Frame. 
    A \emph{Concurrent Game Structure} (CGS) is a pair $\mathfrak{G}=\tuple{\mathcal{G},\mathcal{V}}$, where $\mathcal{G}$ is a game frame, and $\mathcal{V}: \Ap \to \mathcal{P}(S)$ is a \emph{valuation function} assigning to each atomic proposition a subset of $S$. %the set of states of $\mathcal{G}$. 

    Let $\mathit{Prop} (s) = \{p \in \Ap \mid s \in \mathcal{V}(p)\}$ be the set of all atomic propositions true in state $s$. We define the \emph{size of CGS} $\G$ as $|\G| = n + |\Ac| + |\mathcal{D}| + |S| + |R| + \sum_{s \in S} |\mathit{Prop} (s)|$, where $|\mathcal{D}| = |\Ac|^n$. We call CGS $\G$ \emph{finite}, if $|\G|$ is finite.
\end{definition}

%\begin{definition}
 %   Given a CGS $\G$, an assignment over $\G$ is a map 
  %   $\sigma : \V  \to \Ac$ sending each variable to an action. If $\sigma$ is an assignment, $v\in \V $ and $a\in \Ac$, we let $\sigma[a/v]$ denote the assignment $\sigma'$ such that $\sigma'(v')=\sigma(v')$ when $v'\neq v$ and $\sigma'(v)=a$.

%\end{definition}

\begin{definition}
    Given a signature $\alpha=\tuple{m,\mathcal{C},\Ap}$,  and  a CGS $\G=\tuple{n,\Ac,\mathcal{D},S,R,\mathcal{V}}$, we say that $\G$ is constructed over $\alpha$ iff  $m=n$ and $\mathcal{C}=\Ac$. 
 \end{definition}

\begin{definition}[Satisfaction]
Let $\varphi$ be a sentence and $\G$ be a CGS that are both constructed over the same signature $\alpha$. 
 The \emph{satisfaction relation} $\G,s\models \varphi$ is inductively defined %on the structure of $\varphi$ 
as follows: 
    \begin{alignat*}{3}
        &\G,s\models p &&\text{ iff } &&s\in \mathcal{V}(p)\\
        &\G,s\models \neg \psi &&\text{ iff } &&\G,s\not\models \psi\\
        &\G,s\models \psi \land \chi &&\text{ iff } &&\G,s\models \psi \text{ and } \G,s\models \chi\\
        &\G,s\models \assign{{a_1},..., {a_n}} \psi &&\text{ iff } &&\exists t \in S \text{ s.t. } \tuple{s, {a_1}, ..., {a_n},t}\in R \\ & && &&\text{and }\G,t\models  \psi\\
        &\G,s\models \forall x \psi &&\text{ iff } &&\forall a \in \Ac: \G,s\models \psi[{a}/x]
    \end{alignat*}
\noindent where $a_1,...,a_n$ are constants, and  $\psi[{a}/x]$ denotes 
the result of substituting every occurrence of the variable $x$ with the constant ${a}$ in $\psi$. We will also sometimes write $\vec a$ for $a_1\cdots a_n$.
\end{definition}  

%\rustam{If we create a sentence of each formula, maybe we can just say that we are dealing only with sentences?}

\begin{definition}[Closure of a Formula]
\label{def:clos}
    Given a formula $\varphi$ whose set of free variables is $\set{x_1,\ldots,x_n}$, we denote by $C(\varphi)$ the \emph{closure} of $\varphi$, which is the formula $\forall x_1\cdots \forall x_n \varphi$. %(where we assume that the order of variables is fixed in some way). 
\end{definition}

%\rustam{Do we use the second part of the definition below anywhere in the paper?}
%\davide{implicitly yes, it is just for saying that we can make sense of any formula}

\begin{definition}[Validity]\label{def:satopen}

Let $\G$ be a CGS constructed over a signature $\alpha$, and $\varphi$ a formula constructed over $\alpha$. Given a state $s$ of $\G$,
    we write $\G,s\models \varphi$ iff $\G,s\models C(\varphi)$. We say that $\varphi$ is \emph{valid in a CGS} $\G$ (written $\G\models \varphi)$ iff $\G,s\models \varphi$ for every state $s$ of $\G$. Finally, we say that $\varphi$ is $\emph{valid}$ (written $\models \varphi)$ iff it is valid in every CGS constructed over a signature with $n$ agents. Given a set of formulae $X$, we write $\G,s \models X$  if for every  formula $\varphi\in X$, $\G, s\models \varphi$. Finally, we write $X\models \psi$ and we say that $\psi$ is a logical consequence of $X$ iff $\G\models X$ implies $\G\models \psi$ for every CGS $\G$ constructed over the same signature as $\psi$ and formulae  in $X$. %Remark that a formula is valid iff it is logical consequence of the empty set. 
\end{definition}

\begin{remark}\label{remark:open}
    Note that the truth of open (i.e. not closed) formulae is reduced to the truth of the closed ones via closure (Definition \ref{def:clos}). This approach is fairly standard in first-order logic (see, e.g., \cite{vandalen}). We could also define the truth of a formula w.r.t. an assignment, but this would not affect the results presented here. Our choice simplifies the formal machinery of the paper and makes it more readable. 
\end{remark}

    %\begin{itemize}
    %\item $\G,s\models p$ iff $s\in \mathcal{V}(p)$; 
    %\item $\G,s \models \neg \psi$ iff it is not the case that $\G,s \models \psi$ (denoted $\G,s\not\models \psi)$; 
    %\item $\G,s\models \theta \vee\psi $ iff $\G,s \models \theta$ or $\G,s \models \psi$;
    %\item $\G,s \models \assign{    {a_1}\cdots {a_n}} \psi $ iff there is $t \in S$ such that $\tuple{s, {a_1}\cdots {a_n},t}\in R$ and $\G,t\models  \psi $;
    
    %\item $\G,s\rem{,\sigma} \models \forall x \psi$ iff for every   $a \in \Ac$ we have that $\G,s \models \psi[{a}/x]$;
%\end{itemize}
%We can show the following easy proposition. 

The next proposition, that follows straightforwardly from the seriality and functionality of frames, shows that we can give an alternative and equivalent characterisation of the truth of a strategic formula in a state of  a CGS.  

\begin{proposition}
\label{prop:altSem}
    Let $\G = \tuple{n,\Ac, \mathcal{D}, S,R, \mathcal{V} }$ be a CGS, $s \in S$, and $\varphi = \assign{{a_1},..., {a_n}} \psi$, and suppose that both $\G$ and $\varphi$ are constructed over the same signature $\alpha$. %any of its states, and $\varphi$ a formula of the form $\assign{{a_1}\cdots {a_n}} \psi$. 
    Then $\G,s \models \varphi$ iff %if and only if %for every state 
    $\forall t \in S$: $\tuple{s,a_1,\,\ldots, a_n,t}\in R$ implies $\G,t\models \psi$. 
    
     \end{proposition}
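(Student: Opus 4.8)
The plan is to derive the biconditional directly from the two structural properties of game frames, namely \emph{seriality} and \emph{functionality} of the underlying Kripke frame $\tuple{\mathcal{D},S,R}$. The key observation is that, for a fixed state $s$ and a fixed decision $\vec a = (a_1,\ldots,a_n)$ (which plays the role of a single letter of the alphabet $\Sigma = \mathcal{D}$), seriality guarantees \emph{at least one} successor $t$ with $\tuple{s,\vec a,t}\in R$, whereas functionality guarantees \emph{at most one} such successor. Hence $s$ has \emph{exactly one} $\vec a$-successor. Once this is in place, the existential reading built into the satisfaction clause for $\assign{a_1,\ldots,a_n}\psi$ and the universal reading in the statement must coincide, since quantifying over a singleton set gives the same result whether one uses $\exists$ or $\forall$.

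For the left-to-right direction, I would assume $\G,s\models\varphi$ and unfold the satisfaction clause to obtain some $t$ with $\tuple{s,\vec a,t}\in R$ and $\G,t\models\psi$. To establish the universal statement, I take an arbitrary $t'$ with $\tuple{s,\vec a,t'}\in R$; functionality forces $t'=t$, and therefore $\G,t'\models\psi$, which is exactly what the right-hand side demands.

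For the right-to-left direction, I would assume the universal statement and invoke seriality to produce at least one witness $t$ with $\tuple{s,\vec a,t}\in R$. Instantiating the universal hypothesis at this $t$ yields $\G,t\models\psi$, so this $t$ witnesses the existential clause, whence $\G,s\models\varphi$.

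There is no genuine obstacle here: the proposition is essentially the familiar fact that in a deterministic and total transition system the box and diamond modalities collapse. The only point requiring a little care is the bookkeeping of the signature constraint, i.e.\ checking that $\vec a = (a_1,\ldots,a_n)$ is indeed a decision in $\mathcal{D}$, so that seriality and functionality are actually applicable to it. This is immediate once one recalls that $\G$ and $\varphi$ are constructed over the same signature $\alpha$, so the constants $a_1,\ldots,a_n$ are drawn from $\Ac$ and their tuple belongs to $\mathcal{D} = \Ac^n$.
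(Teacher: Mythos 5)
Your proof is correct and follows exactly the route the paper indicates: seriality and functionality of the underlying Kripke frame give a unique $\vec a$-successor, so the existential clause in the satisfaction definition and the universal formulation in the statement coincide. Nothing further is needed.
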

%   \begin{proof}
%       The left-to-right direction is granted because of seriality, while the converse direction %is granted because of  
   %    holds due to totality. 
 %  \end{proof}

   \begin{remark}
      Due to Proposition \ref{prop:altSem}, we %can define the truth clause for 
      %$\assign{\vec{a}}\psi$ as 
      we have that
      $\G,s\models \assign{\vec a}\psi$ iff  $\G,t\models \psi$ for the unique $t$ such that $\tuple{s,\vec{a},t}\in R$. 
   \end{remark}
   
Thus, $\CSL$ can be also viewed as an extension of multi-modal logic \cite{thebluebible,hennessy80} for serial and functional frames with first-order quantification over components of arrow labels (i.e. actions).

\begin{example}
As observed in \cite{belardinelli19}, strategy logics are expressive enough to capture %the intuitions behind 
\textit{Stackelberg equilibrium} (SE). Such an equilibrium is applicable to scenarios where a leader commits to a strategy, and the follower, observing the strategy of the leader, provides her best response. SE is prominent in security games \cite{sinha18}, where the attacker observes the defender committing to a defensive strategy and then decides on the best way to attack (if at all). We can express such a scenario for the case of one-step strategies by the $\CSL$ formula $\forall x_d \exists x_a \forall x_e \assign{x_d, x_a, x_e} \mathit{win_a}$, which intuitively means that for all actions of the defender, the attacker has a counter-action guaranteeing the win for all actions of the environment.    

Similarly to \cite{mogavero10}, with $\CSL$ we can express the existence of deterministic \textit{Nash equilibrium} (NE) for Boolean goals. If $\psi_1, ..., \psi_n$ are goal formulae of agents, we can assert the existence of strategy profile $x_1,...,x_n$ such that if any agent $i$ achieves her goal $\psi_i$ by deviating from $x_1,...,x_n$, then she can also achieve her goal by sticking to the action profile. The existence of such a profile can be expressed by the following $\CSL$ formula:
\begin{align*}
\exists x_1, ..., x_n (\bigwedge_{i=1}^n \exists y_i \assign{x_1,..., y_i, ..., x_n}\psi_i \to \\
\to \assign{x_1,..., x_i, ..., x_n} \psi_i)
\end{align*}
%$$\exists x_1, ..., x_n (\bigwedge_{i=1}^n \exists y_i \assign{x_1,..., y_i, ..., x_n}\psi_i \to \\ \assign{x_1,..., x_i, ..., x_n} \psi_i).$$

$\CSL$ also allows for strategy sharing. Consider examples of CGSs presented in Figure \ref{fig::exampleCGM}. In structure $\G_1$ we have two states $s$ and $t$, and the agents can transition between the two states if they synchronise on their actions, i.e. execute the same actions. It is easy to verify that $\G_1,s \models \forall x \exists y \assign{x,y} p$ and $\G_1, s \models \forall x \assign{x,x} \lnot p$.
\end{example}

%\rustam{Need to double check that SL[SG] does not capture Nash equilibrium}
The fact that $\CSL$ is able to capture the Stackelberg and Nash equilibria is significant, since, compared to $\mathsf{SL}$ and its fragments that can also capture \textit{both} equilibria, the complexity of the model checking problem for $\CSL$ is \textit{PSPACE}-complete as shown in the proof of Theorem \ref{thm:model_checking} (compared to the range from \textit{2ExpTime} to non-elementary for various $\mathsf{SL}$'s \cite{mogavero14}). Moreover, the ability to capture the equilibria can have a significant impact on the prospective applications of $\CSL$. In particular, it was argued \cite{vandermeyden19,galimullin22} that $\mathsf{CL}$ is suitable for specification and verification of \textit{atomic swap} %blockchain 
%smart contracts that may include several participating entities. The use of, e.g., Nash equilibrium will allow us to verify that all parties participating in an execution of a smart contract play their best moves.  
smart contracts that allow agents to exchange their assets or private information, like passwords, on a blockchain without necessarily trusting each other. We can use $\CSL$ to verify that acting honestly is indeed a NE for a given specification of a contract. Moreover, having the ability to express strategy sharing, we can verify that a swap is still executable in the situation, where a malicious agent that gained access to the communication channel poses as one of the honest ones by executing the same actions\footnote{This is the classic person-in-the-middle attack scenario in cryptography.}.

   \section{Relation to Other Coalition Logics}
\label{sec:expressivity}
In order to appreciate the richness of $\CSL$, we compare the logic to other $\mathsf{CL}$'s. In our comparison we can use two salient features of $\CSL$. First, the logic allows for \textit{arbitrary quantification prefixes for agents' actions}. This includes using the same strategy variable for different agents to capture \textit{strategy sharing}. The second special feature of $\CSL$ is the presence of \textit{explicit action labels} in its syntax.

\begin{definition}[Expressivity]
    Let $\mathsf{L}_1$ and $\mathsf{L}_2$ be two languages, and let $\varphi \in \mathsf{L}_1$ and $\psi \in \mathsf{L}_2$. We call $\varphi$ and $\psi$ \emph{equivalent}, if for any CGS $\G$ and state $s \in \G$: 
    %it holds that 
    $\G,s \models \varphi$ iff $\G, s \models \psi$.  If for all $\varphi \in \mathsf{L}_1$ there exists an equivalent $\psi \in \mathsf{L}_2$, then $\mathsf{L}_2$ is \emph{at least as expressive as } $\mathsf{L}_1$ ($\mathsf{L}_1 \leqslant \mathsf{L}_2$). And $\mathsf{L}_2$ is \emph{strictly more expressive than } $\mathsf{L}_1$ ($\mathsf{L}_1 < \mathsf{L}_2$) if $\mathsf{L}_1 \leqslant \mathsf{L}_2$ and $\mathsf{L}_2 \not \leqslant \mathsf{L}_1$.
\end{definition}

These two features of $\CSL$, arbitrary quantification prefixes and explicit actions, on their own are not unique in the landscape of logics for strategic reasoning. %Thus, 
Arbitrary quantification prefixes are a hallmark feature of the whole family of \textit{strategy logics} (see, e.g., \cite{mogavero10,belardinelli19}), to which $\CSL$ belongs. Indeed, $\CSL$ can be considered as a variation of the next-time fragment of $\mathsf{SL}$. %\textit{strategy logic with simple goals} ($\mathsf{SL[SG]}$) \cite{belardinelli19}, 
%which, in turn, is an extension of $\mathsf{ATL}^\ast$ \cite{alur02} with arbitrary quantification prefixes for agents' strategies. 
The idea to refer to actions in the language has also been explored, with %, perhaps, 
a prime example being $\mathsf{ATL}$ \textit{with explicit strategies} ($\mathsf{ATLES}$) \cite{walther07}. Another example of such a logic %with explicit actions 
is \textit{action logic} %($\mathsf{AL}$) 
\cite{borgo07}.%, which we will deal with later in this section.

Even though both of the main features of $\CSL$ have been explored in the literature, to our knowledge, $\CSL$ is the first logic for strategic reasoning that combines \textit{both} of them. %To further demonstrate the unique position of our logic in the landscape of related logics, we compare it to some known coalition logics. %explored in literature. 

%It is easy to show that $\CSL$ and $\mathsf{ATLES}$ are, expressivity-wise, incomparable. Indeed, in the former we can exploit arbitrary quantification prefix, and in the latter we have additional temporal expressivity. 

 %Moreover, $\CSL$ allows for actions of agents to be explicitly referenced in the logic language. These two features combined, i.e. arbitrary quantification prefixes and explicit action labels, make $\CSL$ quite unique in the landscape of the logics for strategic reasoning. 

 \paragraph{Coalition logic and quantified coalition logic} The original \textit{coalition logic} ($\mathsf{CL}$) \cite{pauly02}, similarly to $\mathsf{ATL}$, allows only single alternation of quantifiers in coalitional modalities. Moreover, this quantification is implicit. Thus, $\mathsf{CL}$ extends the language of propositional logic with constructs $\langle \! \langle C \rangle \! \rangle \varphi$ that mean `there is a strategy for coalition $C$ to achieve $\varphi$ in the next step'. % (whatever agents outside of the coalition do at the same time)'. 
 In \textit{quantified coalition logic} ($\mathsf{QCL}$) \cite{agotnes08}, constructs $\langle \! \langle C \rangle \! \rangle \varphi$ are substituted with $\langle P \rangle \varphi$ meaning `there exists a coalition $C$ satisfying property $P$ such that $C$ can achieve $\varphi$'. Since $\mathsf{QCL}$ is as expressive as $\mathsf{CL}$ (although exponentially more succinct), we will focus only on $\mathsf{CL}$. 
 
 To introduce the semantics of $\mathsf{CL}$, we will denote the choice of actions by coalition $C \subseteq Agt$ with $|Agt| = n$ as $\sigma_C$, and denote $Agt \setminus C$ as $\overline{C}$. Finally, $\sigma_C \cup \sigma_{{\overline{C}}} \in \mathcal{D}$ is a decision. The semantics of $\langle \! \langle C \rangle \! \rangle \varphi$ for a given CGS $\G$ is then defined as 
  \begin{alignat*}{3}
        &\G,s \models \langle \! \langle C \rangle \! \rangle \varphi && \text{ iff } && \exists \sigma_C, \forall \sigma_{\overline{C}} : \G,t \models \varphi \\
        & && &&\text{ with } t \in S \text{ s.t. } \langle s, \sigma_C \cup \sigma_{\overline{C}}, t \rangle \in R.   
\end{alignat*}     
 
The translation from formulas $\mathsf{CL}$ to formulas of $\CSL$ can be done recursively using the following schema for coalitional modalities: 
$tr(\langle \! \langle C \rangle \! \rangle \varphi) \to \exists \vec x \forall \vec y \assign{\vec x, \vec y} tr(\varphi)$, 
where variables $\vec x$ (all different) quantify over actions of $C$, and $\vec y$ (all different) quantify over actions of $Agt \setminus C$. 

At the same time, one cannot refer to particular actions in $\mathsf{CL}$ formulas, as well as express sharing strategies between agents. We can exploit either of these features to show that $\CSL$ is strictly more expressive than $\mathsf{CL}$. 
Indeed, consider a $\CSL$ formula $\exists x \assign{x, x} \lnot p$ meaning that there is an action that \textit{both} agents 1 and 2 should use to reach a $\lnot p$-state. We can construct two CGSs that are indistinguishable by any $\mathsf{CL}$ formulas. At the same time,   $\exists x \assign{x, x} \lnot p$ will hold in one structure and be false in another. 

Consider two structures depicted in Figure \ref{fig::exampleCGM}. 
\begin{figure}[h!]
\centering
\scalebox{0.7}{
\begin{tikzpicture}
\node(-1) at (2,0) {$\G_1$};
\node[circle,draw=black, minimum size=4pt,inner sep=0pt, fill = black, label=below:{$s$}](1) at (0,0) {};
\node[circle,draw=black, minimum size=4pt,inner sep=0pt, , label=below:{$t$}](2) at (4,0) {};

\draw [->,thick](1) to [loop above] node[above, align=left] {$ab, ba$} (1);
\draw [->,thick](1) to [bend right] node[below,align=left] {$aa, bb$} (2);
\draw [->,thick] (2) to [bend right] node[above,align=left] {$aa, bb$} (1);
\draw [->,thick] (2) to [loop above] node[above,align=left] {$ab, ba$} (2);
\end{tikzpicture}
\hspace{6mm}
\begin{tikzpicture}
\node(-1) at (2,0) {$\G_2$};
\node[circle,draw=black, minimum size=4pt,inner sep=0pt, fill = black, label=below:{$s$}](1) at (0,0) {};
\node[circle,draw=black, minimum size=4pt,inner sep=0pt, , label=below:{$t$}](2) at (4,0) {};

\draw [->,thick] (1) to [loop above] node[above, align=left] {$aa, bb$} (1);
\draw [->,thick](1) to [bend right] node[below,align=left] {$ab, ba$} (2);
\draw [->,thick] (2) to [bend right] node[above,align=left] {$ab, ba$} (1);
\draw [->,thick] (2) to [loop above] node[above,align=left] {$aa, bb$} (2);
\end{tikzpicture}
}
\caption{CGSs $\G_1$ and $\G_2$ for two agents and two actions. Propositional variable $p$ is true in black states.}
\label{fig::exampleCGM}
\end{figure}
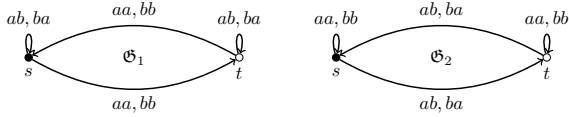 
It is easy to see that $\G_1,s$ and $\G_2,s$ cannot be distinguished by any $\mathsf{CL}$ formula\footnote{These structures are, in fact, in the relation of \textit{alternating bisimulation} \cite{agotnes07}, and hence satisfy the same formulas of $\mathsf{CL}$ and $\mathsf{ATL}$. The discussion of bisimulations for all the logics we mention is, however, beyond the scope of this paper, and we leave it for future work.}. Indeed, both structures agree on the valuation of propositional variable $p$ in corresponding states. Moreover, none of the agents, 1 and 2, can on their own force a transition from state $s$ to state $t$. At the same time, the grand coalition $\{1,2\}$ can match any transition in one structure with a transition with the same effect in the other structure. %force any transition in both structures. 
Now, we can verify that $\G_1,s \models \exists x \assign{x, x} \lnot p$ and $\G_2,s \not \models \exists x \assign{x, x} \lnot p$. For the case of  $\G_1,s \models \exists x \assign{x, x} \lnot p$, it is enough to assign action $a$ to $x$ to have $\G_1,s \models \assign{a, a} \lnot p$. To make $\exists x \assign{x, x} \lnot p$ hold in $\G_2, s$, one needs to provide an action that once executed by both agents will force the transition to state $t$. It is easy to see that there is no such an action in $\G_2,s$.

Having the translation from $\mathsf{CL}$ to $\CSL$ on the one hand, and the indistinguishability result on the other, we hence conclude that $\CSL$ is \textit{strictly more expressive} than $\mathsf{CL}$.

\begin{proposition}
\label{prop:cslVScl}
    $\mathsf{CL} < \CSL$.
\end{proposition}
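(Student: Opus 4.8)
The plan is to establish the two halves of the strict-expressivity claim $\mathsf{CL} < \CSL$ separately, corresponding directly to the two conditions in the definition of $<$. First I would prove $\mathsf{CL} \leqslant \CSL$ by exhibiting the translation $tr$ already sketched in the excerpt and verifying that it preserves truth. The translation is defined recursively on the structure of $\mathsf{CL}$ formulae: it is the identity on atoms, commutes with the Boolean connectives, and on the coalitional modality sends $\langle\!\langle C \rangle\!\rangle \varphi$ to $\exists \vec{x}\, \forall \vec{y}\, \assign{\vec{x},\vec{y}}\, tr(\varphi)$, where $\vec{x}$ are fresh distinct variables indexing the actions of agents in $C$ and $\vec{y}$ fresh distinct variables indexing the actions of the complement $\overline{C}$. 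I would then prove by induction on the structure of $\varphi$ that for every CGS $\G$ and state $s$, $\G,s \models \varphi$ iff $\G,s \models tr(\varphi)$. The only nontrivial inductive case is the modality, and here the crucial observation is that the $\CSL$ semantics of $\exists\vec{x}\,\forall\vec{y}$ unfolds, via the clauses for $\forall$ (and $\exists$ as its dual) together with the substitution of constants for variables, into exactly the $\exists\sigma_C\,\forall\sigma_{\overline{C}}$ quantification pattern of the $\mathsf{CL}$ clause, since $\mathcal{C} = \Ac$ means quantifying over variable-actions is literally quantifying over the agents' available actions; Proposition~\ref{prop:altSem} guarantees the inner $\assign{\cdot}$ picks out the unique successor where the inductive hypothesis applies.

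Second, I would prove $\CSL \not\leqslant \mathsf{CL}$, which is the direction where the indistinguishable-structures argument carries all the weight. I would take the two CGSs $\G_1$ and $\G_2$ of Figure~\ref{fig::exampleCGM} and the separating sentence $\chi := \exists x \assign{x,x}\lnot p$, and argue as follows. On one hand, $\chi \in \CSL$ and the explicit computation shows $\G_1,s \models \chi$ (witnessed by $x = a$, giving transition $\assign{a,a}$ to the $\lnot p$-state $t$) while $\G_2,s \not\models \chi$ (no single shared action forces a move out of the $p$-state $s$, since the diagonal decisions $aa,bb$ are exactly the self-loops). On the other hand, I must argue that $\G_1,s$ and $\G_2,s$ satisfy the same $\mathsf{CL}$ formulae. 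If $\chi$ had an equivalent $\mathsf{CL}$ formula $\chi'$, then $\chi'$ would distinguish $\G_1,s$ from $\G_2,s$, contradicting their $\mathsf{CL}$-equivalence; hence no such $\chi'$ exists and $\CSL \not\leqslant \mathsf{CL}$.

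The main obstacle is the $\mathsf{CL}$-indistinguishability of $\G_1,s$ and $\G_2,s$, which is asserted in the excerpt via alternating bisimulation but deserves a self-contained justification. The cleanest route, which I would take, is a direct induction on $\mathsf{CL}$ formula structure showing $\G_1,s \models \psi$ iff $\G_2,s \models \psi$ (and likewise for the pair $\G_1,t$, $\G_2,t$) for every $\psi \in \mathsf{CL}$. The atomic and Boolean cases are immediate from the fact that both structures agree on the valuation of $p$ at corresponding states. For the coalitional modality $\langle\!\langle C \rangle\!\rangle\psi$, the key is a matching argument on coalitional powers: for a singleton or empty coalition, neither agent alone can force leaving the current state in either structure, so the reachable-state sets under any fixed choice $\sigma_C$ coincide up to the state-correspondence $s \leftrightarrow s$, $t \leftrightarrow t$; for the grand coalition $\{1,2\}$, every decision available at $s$ in $\G_1$ can be matched by a decision available at $s$ in $\G_2$ reaching the corresponding state, and vice versa (each structure can reach both $s$ and $t$ from $s$). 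In all cases the coalitional power distributions are preserved under the correspondence, so the inductive hypothesis applies to the successors and the biconditional lifts to $\langle\!\langle C \rangle\!\rangle\psi$. Making this matching precise — essentially exhibiting the alternating bisimulation explicitly rather than invoking it — is the technical heart of the proof, after which Proposition~\ref{prop:cslVScl} follows by combining the two directions.
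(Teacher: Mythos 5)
Your proposal is correct and follows essentially the same route as the paper: the recursive translation $tr(\langle\!\langle C\rangle\!\rangle\varphi)=\exists\vec{x}\,\forall\vec{y}\,\assign{\vec{x},\vec{y}}tr(\varphi)$ for $\mathsf{CL}\leqslant\CSL$, and the two CGSs of Figure~\ref{fig::exampleCGM} separated by $\exists x\assign{x,x}\lnot p$ but $\mathsf{CL}$-indistinguishable for the strictness. Your explicit induction establishing the $\mathsf{CL}$-indistinguishability via matching coalitional powers is just a spelled-out version of the paper's appeal to alternating bisimulation, so the arguments coincide in substance.
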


\paragraph{Conditional strategic reasoning and socially friendly $\mathsf{CL}$} With the expressive power of $\CSL$ we can go much further than the classic $\mathsf{CL}$. In particular, we can express in our logic such interesting $\mathsf{CL}$'s like \textit{logic for conditional strategic reasoning} ($\mathsf{ConStR}$) \cite{goranko22}, \textit{socially friendly $\mathsf{CL}$} ($\mathsf{SFCL}$) \cite{goranko18}, and \textit{group protecting $\mathsf{CL}$} ($\mathsf{GPCL}$)  \cite{goranko18}. 

Presenting the semantics of the aforementioned logic is beyond the scope of this paper. However, we would like to point out that all of the logics can be captured by \textit{basic strategy logic} ($\mathsf{BSL}$) \cite{goranko23}, a variant of $\mathsf{SL}$, where each agent has her own associated strategy variable. Differently from $\CSL$, $\BSL$ allows for all standard temporal modalities like `ne\textsf{X}t', `\textsf{U}ntil' and `\textsf{G}lobally'. At the same time, $\BSL$ does not allow for variable sharing and does not explicitly refer to actions or strategies. Moreover, it is conjectured that $\BSL$ does not have a recursive axiomatisation, while $\CSL$ has a finitary complete axiomatisation (see Section \ref{sec:axiom}).

Translations of all coalition logics introduced in this paragraph into formulas of $\mathsf{BSL}$ are presented in \cite{goranko23}, where it is also claimed that $\mathsf{BSL}$ is strictly more expressive than all the aforementioned logics. The translation does not employ any temporal features of $\mathsf{BSL}$ apart from `ne\textsf{X}t', and thus the same translation also works for $\CSL$. Moreover, we can use either strategy sharing or explicit actions to argue that $\CSL$ is strictly more expressive than the considered coalition logics. As an example, an argument for the case of $\mathsf{SFCL}$ is given in the Appendix.

%For the sake of an example, 

\begin{proposition}
    $\mathsf{ConStR} < \CSL$, $\mathsf{SFCL} < \CSL$, $\mathsf{GPCL} < \CSL$.
\end{proposition}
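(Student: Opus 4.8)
The plan is to establish the three strict expressivity claims $\mathsf{ConStR} < \CSL$, $\mathsf{SFCL} < \CSL$, and $\mathsf{GPCL} < \CSL$ by the same two-part strategy already used for Proposition \ref{prop:cslVScl}. For each logic $\mathsf{L} \in \{\mathsf{ConStR}, \mathsf{SFCL}, \mathsf{GPCL}\}$ we must show (i) $\mathsf{L} \leqslant \CSL$, i.e. every formula of $\mathsf{L}$ has a truth-equivalent $\CSL$ formula, and (ii) $\CSL \not\leqslant \mathsf{L}$, i.e. some $\CSL$ formula has no equivalent in $\mathsf{L}$.

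For part (i), I would lean on the translations into $\mathsf{BSL}$ given in \cite{goranko23}. Since those translations of $\mathsf{ConStR}$, $\mathsf{SFCL}$, and $\mathsf{GPCL}$ use no temporal connective beyond the next-time operator `ne\textsf{X}t', and since the next-time fragment of $\mathsf{BSL}$ coincides semantically with (a sublanguage of) $\CSL$ under the obvious reading of strategy quantifiers as action quantifiers, I would argue that composing the known translation into $\mathsf{BSL}$ with the identification of next-time $\mathsf{BSL}$ and $\CSL$ yields a truth-preserving translation $tr(\cdot)$ directly into $\CSL$. This gives $\mathsf{L} \leqslant \CSL$ for all three logics at once, so I would phrase it as a single lemma to avoid repetition.

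For part (ii), I would exploit the two features that $\CSL$ possesses but these logics lack, namely strategy sharing and explicit action labels. The cleanest route is to reuse the witness formula and the pair of structures from the proof of Proposition \ref{prop:cslVScl}: take $\varphi := \exists x \assign{x,x} \lnot p$ together with the CGSs $\G_1$ and $\G_2$ of Figure \ref{fig::exampleCGM}, for which we have already established $\G_1,s \models \varphi$ and $\G_2,s \not\models \varphi$. Since $\mathsf{ConStR}$, $\mathsf{SFCL}$, and $\mathsf{GPCL}$ are all fragments of (or translatable into) $\mathsf{BSL}$ with neither variable sharing nor explicit action reference, any formula $\psi$ of these logics cannot distinguish agents that must play \emph{the same} action; more precisely, $\G_1,s$ and $\G_2,s$ should satisfy exactly the same formulas of each of these logics, by the same alternating-bisimulation-style argument (footnoted in the $\mathsf{CL}$ case). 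Hence no $\psi$ in any of the three logics can be equivalent to $\varphi$, giving $\CSL \not\leqslant \mathsf{L}$. Combining (i) and (ii) yields the three strict inequalities.

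The main obstacle I anticipate is part (ii): the indistinguishability claim for $\G_1,s$ and $\G_2,s$ must be justified for each of these richer logics, not merely asserted. For $\mathsf{CL}$ it follows from alternating bisimulation, but $\mathsf{ConStR}$, $\mathsf{SFCL}$, and $\mathsf{GPCL}$ have more expressive strategic constructs (conditional commitments, social-friendliness quantifiers, group protection), so I would need either to verify that the relevant notion of bisimulation for these logics still relates $\G_1,s$ and $\G_2,s$, or—since the structures have only two states and two actions—to check the finitely many relevant strategy profiles by hand and confirm that every quantification pattern available in these logics, being over \emph{distinct} per-agent strategy variables with no ability to force equality, yields the same truth value in both structures. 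Because a full bisimulation theory for all three logics is explicitly left for future work in the excerpt, I expect the paper to defer the detailed argument (for $\mathsf{SFCL}$) to the Appendix, and I would do the same, presenting the $\mathsf{SFCL}$ case in full and noting that the $\mathsf{ConStR}$ and $\mathsf{GPCL}$ cases are entirely analogous.
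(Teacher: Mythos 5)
Your proposal matches the paper's argument: the inclusion $\mathsf{L} \leqslant \CSL$ is obtained by composing the next-time-only translations into $\mathsf{BSL}$ from \cite{goranko23} with the identification of that fragment with $\CSL$, and strictness is obtained via strategy sharing (the witness $\exists x \assign{x,x}\lnot p$ on the structures of Figure \ref{fig::exampleCGM}), with the detailed indistinguishability argument carried out only for $\mathsf{SFCL}$ and deferred to the Appendix. Your anticipated obstacle---that the indistinguishability of $\G_1,s$ and $\G_2,s$ must be re-verified for each richer logic rather than inherited from the $\mathsf{CL}$ case---is exactly the part the paper also relegates to the Appendix, so the two treatments coincide.
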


\paragraph{Action logic} 

A perhaps most relevant to $\CSL$ coalition logic in the literature is \textit{action logic} ($\mathsf{AL}$) \cite{borgo07}, which is a fragment of \textit{multi-agent PDL with quantificaiton} ($\mathsf{mPDLQ}$) \cite{borgo05}. $\mathsf{AL}$ extends the language of propositional logic with so-called \textit{modality markers} $[M]$, which are, essentially, prefixes of size $|Agt| = n$, each element of which can either be a quantifier $Q_i x_i$ with $Q_i \in \{\forall, \exists\}$ or an explicit action \cite{borgo05,borgo05b}. An important feature here is that \textit{there are no repeating variables in modality markers}. Finally, to the best of our knowledge, there is no axiomatisation of  $\mathsf{AL}$.

Given a modality marker $[M]$, we denote by $\sigma_\exists$ a choice by all the existentially quantified agents, by $\sigma_\forall$ a choice by all the universally quantified agents, and by $\sigma_{act}$ explicit actions in the corresponding positions in $[M]$. Then modality markers have the following semantics:
  \begin{alignat*}{3}
        &\G,s \models [M] \varphi && \text{ iff } && \exists \sigma_\exists, \forall \sigma_\forall : \G,t \models \varphi \\
        & && &&\text{ with } t \in S \text{ s.t. } \langle s, \sigma_\exists \cup \sigma_\forall \cup \sigma_{act}, t \rangle \in R.   
\end{alignat*}   
\iffalse
\begin{gather*}
    \G,s \models [M] \varphi \text{ iff }\\
    \text{ for all } x_1, ...., x_m \text{ with } \exists x_i \in M, \exists a_i \in \Ac \text{ s.t. } \\
        \langle s, A, s'\rangle \in R \text{ and } \G,s'\models \varphi, \text{ for all } A \text{ s.t. } a_1, ..., a_m \sqsubseteq A.  
\end{gather*}
\fi
\iffalse
  \begin{alignat*}{3}
        &\G,s \models [M] \varphi && \text{ iff } && \text{ for all } x_1, ...., x_m \text{ with } \exists x_i \in M, \exists a_i \in \Ac \text{ s.t. }\\
        & && &&\langle s, A, s'\rangle \in R \text{ and } \G,s'\models \varphi,\\ & && &&\text{ for all } A \text{ s.t. } a_1, ..., a_m \sqsubseteq A.  
\end{alignat*}    
\fi
Intuitively, $\G,s \models [M] \varphi$ holds if and only if there is an assignment of actions to all existentially quantified variables in modality marker $M$ such that no matter which actions are assigned to the universally quantified variables, once combined with the explicit actions, the outcome state satisfies $\varphi$. This is in line with the semantics of $\mathsf{CL}$ as we basically choose actions for a coalition (existentially quantified variables) and verify $\psi$ in all possible outcomes given this choice. 

Formulae $[M]\varphi$ of $\mathsf{AL}$ can be translated into formulae of $\CSL$ of the form $\exists \vec x \forall \vec y \assign{t_1, ..., t_n} \varphi$, where $\vec x$ and $\vec y$ with $|\vec x| + |\vec y| \leqslant n$ are possibly empty sequences of variables for the existentially and universally quantified agents respectively, $t_i := x_i$ if there is a quantifier in position $i$ in the modality marker, and $t_i:= a_i$ if there is action $a_i$ in the $i$th posiiton in the modality marker. Also recall that $\mathsf{AL}$ does not allow for sharing strategies (while $\CSL$ does), i.e. all $x_1, ..., x_m$ in the modality marker are unique. 

To show that $\CSL$ is more expressive than $\mathsf{AL}$, we need the following proposition. Its proof utilises the strategy sharing feature of $\CSL$ and can be found in the Appendix. %approach the relative expressivity of $\CSL$ and $\mathsf{AL}$ formally, we prove the following lemma. 

\begin{proposition}
\label{lemma:exp}
    $\mathsf{AL}$ is not at least as expressive as $\CSL$.
\end{proposition}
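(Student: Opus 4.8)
The plan is to exhibit a single $\CSL$ formula that uses strategy sharing and show that no $\mathsf{AL}$ formula can be equivalent to it. The natural candidate is the formula already used to separate $\CSL$ from $\mathsf{CL}$, namely $\varphi := \exists x \assign{x,x} \lnot p$ (for the two-agent case), since its meaning crucially depends on the two agents executing the \emph{same} action, something that $\mathsf{AL}$ cannot refer to because all variables in a modality marker are required to be distinct. Following the standard recipe for separating expressivity, I would produce two CGSs $\G_1$ and $\G_2$ (over two agents and a suitable action alphabet) such that $\G_1,s \models \varphi$ but $\G_2,s \not\models \varphi$, while $\G_1,s$ and $\G_2,s$ agree on every $\mathsf{AL}$ formula. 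Then, if some $\psi \in \mathsf{AL}$ were equivalent to $\varphi$, it would have to both hold at $\G_1,s$ and fail at $\G_2,s$, contradicting $\mathsf{AL}$-indistinguishability; hence no such $\psi$ exists and $\mathsf{AL} \not\leqslant \CSL$.

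The core work is therefore twofold. First I would verify the semantic values: $\G_1,s \models \exists x \assign{x,x}\lnot p$ by exhibiting a witnessing action $a$ whose diagonal decision $\tuple{a,a}$ leads to a $\lnot p$-state, and $\G_2,s \not\models \exists x\assign{x,x}\lnot p$ by checking that in $\G_2$ every diagonal decision $\tuple{a,a}$ leads to a $p$-state. The CGSs from Figure~\ref{fig::exampleCGM} are promising, since there the diagonal decisions $aa,bb$ in $\G_1$ move to $t$ while in $\G_2$ they self-loop at $s$ — so I would reuse or lightly adapt that construction, being careful about which state carries $p$. Second, and this is the crux, I must establish that $\G_1,s$ and $\G_2,s$ satisfy exactly the same $\mathsf{AL}$ formulae. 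The inductive step is immediate for the Boolean connectives and for atoms (the two pointed models agree on $\mathit{Prop}$); the real content is the modality-marker case $[M]\chi$. Here I would argue that, because every marker quantifies over distinct variables (no sharing) and its semantics only existentially and universally selects actions independently per position, the set of outcome decisions that a marker can ``reach or control'' from $s$ is identical in $\G_1$ and $\G_2$ up to the chosen actions' effect on the next state. Concretely, one shows that for each marker $[M]$ and each required choice of existential/universal/explicit actions, the set of reachable next-states from $s$ coincides across the two structures, so that $\G_1,s\models[M]\chi \iff \G_2,s\models[M]\chi$ follows from the induction hypothesis at the successors.

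The main obstacle is making the marker-case argument fully rigorous, because $\mathsf{AL}$ markers allow explicit actions in some positions and quantifiers in others, in any combination. I expect the clean way to handle this is to identify the structural invariant shared by $\G_1$ and $\G_2$ that is preserved by \emph{all} non-diagonal-sensitive markers. A good formulation is the observation already hinted at in the paper: the two structures are related by an alternating bisimulation, and one extends this (or argues directly) to show that any marker that does not force two agents onto the same action sees the same reachable states. The delicate point is that a marker with explicit actions $a_i$ in certain positions and a quantifier elsewhere can, for particular substitutions, coincide numerically with a diagonal decision; I must confirm that even so the \emph{set} of successor states obtainable is the same in both structures, so the distinction visible to $\varphi$ (which globally fixes a single shared variable across both positions) cannot be replicated. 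Once this invariant is pinned down, the induction closes and the separation follows. For full details I would defer the marker-case verification to the Appendix, as the proposition statement indicates.
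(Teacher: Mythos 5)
Your overall strategy matches the paper's: the witness formula $\exists x \assign{x,x}\lnot p$, the two structures of Figure~\ref{fig::exampleCGM}, and an induction on $\mathsf{AL}$ formulae whose only non-trivial case is the modality marker. But there is a genuine gap at exactly the point you flag as ``delicate'', and the confirmation you hope for there is false. An $\mathsf{AL}$ modality marker may consist entirely of explicit actions, and the fully explicit marker $[a,a]$ already distinguishes the two pointed structures: from $s$ the decision $aa$ leads to $t$ (a $\lnot p$-state) in $\G_1$ but loops back to $s$ (a $p$-state) in $\G_2$, so $\G_1,s\models [a,a]\lnot p$ while $\G_2,s\not\models [a,a]\lnot p$. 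Hence your stated invariant --- that $\G_1,s$ and $\G_2,s$ satisfy exactly the same $\mathsf{AL}$ formulae --- is simply not true, and the induction cannot close. (The ``same set of reachable successors'' observation does hold in these structures for every marker with at least one quantified position, but it breaks precisely for the fully explicit diagonal markers.)

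The missing idea, which is the first move of the paper's proof, is to play the finiteness of the candidate formula against the countably infinite constant set: assume towards a contradiction that some $\varphi\in\mathsf{AL}$ is equivalent to $\exists x\assign{x,x}\lnot p$; since $\varphi$ mentions only finitely many constants and $\mathcal{C}$ is infinite, choose the two actions $a,b$ on which $\G_1$ and $\G_2$ differ to be actions that do \emph{not} occur explicitly in $\varphi$. One then proves indistinguishability only with respect to this fixed $\varphi$ (and its subformulae), not with respect to all of $\mathsf{AL}$; in the marker case every explicitly named action is now distinct from $a$ and $b$, so the problematic fully explicit diagonal is never available to $\varphi$, while no quantified marker can tie the two agents to the same action because $\mathsf{AL}$ forbids repeated variables. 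With that freshness assumption your induction goes through; without it, it fails.
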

\iffalse
\begin{proof}
    Consider $\exists x \assign{x, x} \lnot p \in \CSL$, and assume towards a contradiction that there is an equivalent $\varphi \in \mathsf{AL}$. Since we have a countably infinite set of constants $\mathcal{C}$ (and hence actions) at our disposal  and due to the fact that $\varphi$ is finite, we can assume that there are actions $a$ and $b$ that do not appear explicitly in $\varphi$. 

Now, consider two concurrent game structures defined over two agents and two actions in Figure \ref{fig::exampleCGM}. As we have already seen in our argument for Proposition \ref{prop:cslVScl}, $\G_1,s \models \exists x \assign{x, x} \lnot p$ and $\G_2,s \not \models \exists x \assign{x, x} \lnot p$. What is left to show is that $\varphi$ cannot distinguish the two structures, i.e. $\G_1,s \models \varphi$ if and only if $\G_2, s \models \varphi$. The proof is by induction on the complexity of $\varphi$, and the details can be found in the Appendix.
\end{proof}
\fi

Having the translation from $\mathsf{AL}$ to $\CSL$ on the one hand, and Lemma \ref{lemma:exp} on the other, we can conclude that $\CSL$ is strictly more expressive than $\mathsf{AL}$.

\begin{corollary}
\label{alVScsl}
    $\mathsf{AL} < \CSL$.
\end{corollary}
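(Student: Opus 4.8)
The plan is to prove Corollary \ref{alVScsl} by combining the translation from $\mathsf{AL}$ to $\CSL$ (already sketched in the text) with Proposition \ref{lemma:exp}. Recall that to establish strict expressivity $\mathsf{AL} < \CSL$, by the definition of expressivity we need two things: first, that $\mathsf{AL} \leqslant \CSL$, i.e.\ every $\mathsf{AL}$ formula has an equivalent $\CSL$ formula; and second, that $\CSL \not\leqslant \mathsf{AL}$, i.e.\ some $\CSL$ formula has no equivalent in $\mathsf{AL}$.

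For the first half, $\mathsf{AL} \leqslant \CSL$, I would appeal to the translation already described preceding the statement: a modality marker formula $[M]\varphi$ maps to $\exists \vec x \forall \vec y \assign{t_1, \ldots, t_n}\varphi$, where existentially quantified positions in $M$ become the fresh variables $\vec x$, universally quantified positions become $\vec y$, and explicit actions $a_i$ in $M$ become the corresponding constants $t_i = a_i$. The key step is to verify that this translation preserves truth, which follows by a routine induction on formula structure: the atomic, negation, and conjunction cases are immediate, and the modality marker case is exactly the observation that the semantics of $[M]\varphi$ and of $\exists \vec x \forall \vec y \assign{t_1, \ldots, t_n}\varphi$ coincide by comparing the two satisfaction clauses side by side. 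Since all variables in a modality marker are unique, no strategy sharing is needed and the translation lands in $\CSL$ without difficulty.

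For the second half, $\CSL \not\leqslant \mathsf{AL}$, I would invoke Proposition \ref{lemma:exp} directly, which states precisely that $\mathsf{AL}$ is not at least as expressive as $\CSL$. Concretely, this witnesses a $\CSL$ formula, namely $\exists x \assign{x, x} \lnot p$ (which uses strategy sharing), for which no equivalent $\mathsf{AL}$ formula exists: the structures $\G_1$ and $\G_2$ of Figure \ref{fig::exampleCGM} are distinguished by this formula but agree on every $\mathsf{AL}$ formula. Putting the two halves together yields $\mathsf{AL} \leqslant \CSL$ and $\CSL \not\leqslant \mathsf{AL}$, which is exactly the definition of $\mathsf{AL} < \CSL$, completing the proof.

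The main obstacle, and the genuine content of the argument, is entirely contained in Proposition \ref{lemma:exp}, since the translation direction is routine. The delicate part there is showing the \emph{indistinguishability} of $\G_1$ and $\G_2$ by all $\mathsf{AL}$ formulae, which requires an induction on formula complexity exploiting the crucial fact that $\mathsf{AL}$ forbids repeated variables in modality markers; because of this restriction $\mathsf{AL}$ can never force both agents to synchronise on a single quantified action, whereas $\CSL$ can via $\assign{x,x}$. Since Proposition \ref{lemma:exp} is already available to us as a stated result, the corollary itself reduces to assembling these two ingredients, and the proof is short.
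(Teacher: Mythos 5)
Your proposal is correct and follows essentially the same route as the paper: the corollary is obtained by combining the truth-preserving translation of modality markers $[M]\varphi$ into $\exists \vec x \forall \vec y \assign{t_1,\ldots,t_n}\varphi$ (giving $\mathsf{AL} \leqslant \CSL$) with Proposition \ref{lemma:exp} (giving $\CSL \not\leqslant \mathsf{AL}$ via the strategy-sharing formula $\exists x \assign{x,x}\lnot p$ and the structures of Figure \ref{fig::exampleCGM}). Your additional remarks correctly locate the real work in the indistinguishability induction inside Proposition \ref{lemma:exp}, which the paper likewise delegates to the Appendix.
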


\paragraph{The expressivity landscape}
In this section we have explored the relationship between $\CSL$ and other notable $\mathsf{CL}$'s from the literature. The overall expressivity landscape of the considered logics is presented in Figure \ref{fig:expressivity}.

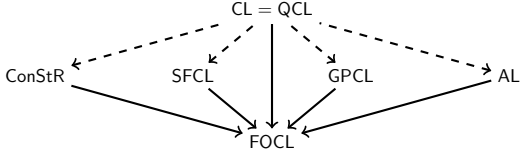
\begin{figure}[h!]
\centering
\begin{tikzpicture}[scale=0.7, transform shape]
\node (constr) at (0,0) {$\mathsf{ConStR}$};
\node (sfcl) at (3,0) {$\mathsf{SFCL}$};
\node (gpcl) at (6,0) {$\mathsf{GPCL}$};
\node (al) at (9,0) {$\mathsf{AL}$};
\node (csl) at (4.5,-1.25) {$\CSL$};
\node (cl) at (4.5,1.25) {$\mathsf{CL} = \mathsf{QCL}$};

\draw[thick, dashed, <-] (constr) to  (cl);
\draw[thick, dashed, <-] (sfcl) to (cl);
\draw[thick,dashed, <-] (gpcl) to (cl);
\draw[thick, dashed, <-] (al) to (cl);
\draw[thick,<-] (csl) to (cl);
\draw[thick,<-] (csl) to (constr);
\draw[thick,<-] (csl) to (sfcl);
\draw[thick,<-] (csl) to (gpcl);
\draw[thick,<-] (csl) to (al);
\end{tikzpicture}
\caption{Overview of the expressivity results. An arrow from $\mathsf{L}_1$ to $\mathsf{L}_2$ means $\mathsf{L}_1 <\mathsf{L}_2$. Dashed arrows represent results from the literature. Solid arrows are new results.}
\label{fig:expressivity}
\end{figure}

\section{Proof Theory}
\label{sec:axiom}

%\subsection{Context}
Perhaps the best-known results in the field are complete axiomatisations of $\mathsf{CL}$ \cite{pauly02,goranko13} and $\mathsf{ATL}$ \cite{goranko06} (see \cite{walther06,goranko09} for more constructive approaches). Other completeness results include axiomatisations for logics based on $\mathsf{CL}$ and $\mathsf{ATL}$, like already mentioned $\mathsf{SFCL}$ \cite{goranko18}, $\mathsf{ATLES}$ \cite{walther07}, as well as \textit{epistemic} $\mathsf{CL}$ \cite{agotnes19}, \textit{resource-bounded} $\mathsf{CL}$ \cite{alechina11} and $\mathsf{ATL}$ \cite{nguyen18}, and $\mathsf{ATL}$ \textit{with finitely bounded semantics} \cite{goranko19}, to name a few.

In the context of strategy logics, we have quite an opposite picture. Since the inception of $\mathsf{SL}$ \cite{mogavero10}, its axiomatisation has been an open problem. The same can be said about any of the fragments of $\mathsf{SL}$. %, like \textit{one-goal} $\mathsf{SL}$ ($\mathsf{SL[1G]}$). %Indeed, the satisfiability problem for the original $\mathsf{SL}$ is $\Sigma^1_1$-hard \cite{mogavero16}, which implies that $\mathsf{SL}$ is not recursively axiomatisable. This does not rule out, though, the existentence of an \textit{infinitary} axiomatisation.
%A fragment of $\mathsf{SL}$, called \textit{one-goal} $\mathsf{SL}$ ($\mathsf{SL[1G]}$), has a decidable satisfiability problem so there is a hope of having a proof system for it. However, $\mathsf{SL[1G]}$ subsumes $\mathsf{ATL}^\ast$, and providing a complete axiomatisation of the latter is yet another long-standing open problem. 
The lack of axiomatisations of \textit{any} (fragment of) $\mathsf{SL}$ can be traced back to the two main features of the logic: quantification over strategies and arbitrary quantification prefixes. %To understand these nuances of $\mathsf{SL}$ better, we need the following defintion.

%\begin{definition}
%    Let $\G = \tuple{n,\Ac, \mathcal{D}, S,R, \mathcal{V} }$ be a CGS. A \emph{(memoryless) strategy} for an agent $i \in n$ is a function $\sigma_i: S \to \Ac$. The set of all strategies of all agents is denoted as $\Sigma (\G)$. An \emph{assignment} is a function $\chi: \V \cup n \to \Sigma (\G)$ that assigns strategies from $\Sigma (\G)$ to each agent and each variable.
%\end{definition}

%The language of $\mathsf{SL}$ extends the language of \textit{linear temporal logic} ($\mathsf{LTL})$ with constructs for quantification over strategies $\exists x \varphi$ and for binding agents to strategies $(x,i) \varphi$. The latter means `agent $i$ can achieve $\varphi$ by using strategy assigned to $x$'. The semantics of $\mathsf{SL}$ is defined relative to a given assignment $\chi$:
   % \begin{alignat*}{3}
     %   &\G,s\models_\chi \exists x \varphi &&\text{ iff } &&\exists \sigma_i \text{ for all agents } i \text{ bound to } x: \G,s\models_{\chi_{\sigma_i}^x} \varphi\\
     %   &\G,s\models_\chi (x,i) \varphi &&\text{ iff } && \G,s\models_{\chi_{\chi(x)}^i} \varphi
    %\end{alignat*}
%In the semantics, $\chi_{\sigma_i}^x$ stands for a variant of $\chi$ such that variable $x$ is assigned strategy $\sigma_i$ (and everything else is the same), and $\chi_{\chi(x)}^i$ stands for a variant  of $\chi$, where agent $i$ is assigned the strategy associated with variable $x$ (and everything else is the same).

%Having these definitions in mind, it is easy to note that, firstly, the quantification prefix of $\mathsf{SL}$ formulas can be arbitrary. 
Indeed, arbitrary alternation of quantifiers in $\mathsf{SL}$ is quite different from the fixed quantification prefix of $\mathsf{CL}$ and $\mathsf{ATL}$ that allow only prefixes $\exists \forall$ and $\forall \exists$. Secondly, quantification over strategies\footnote{A \emph{(memoryless) strategy} for an agent $i \in n$ is a function $\sigma_i: S \to \Ac$.}  in $\mathsf{SL}$ is essentially a second-order quantification over functions. We believe that these two features combined are the root cause of the fact that no complete axiomatisations of (fragments of) $\mathsf{SL}$ %or its fragments 
have been proposed so far. 

In $\CSL$ we focus on arbitrary quantification prefixes. To solve this sub-problem, we consider only ne$\mathsf{X}$t-time modalities $\assign{t_1\cdots t_n} \varphi$ and deal with the immediate outcomes of agents' choices. 
%Other temporal modalities, like $\mathsf{U}$ntil and $\mathsf{E}$ventually would require a more complicated construction as their truth values depend on the whole computation paths rather than just the next step (see \cite{goranko06} on how to deal with them in the context of $\mathsf{ATL}$). 
This allows us, in particular, to consider quantification over actions rather than strategies. %(see the semantics of $\CSL$ and $\mathsf{SL}$ for comparison). 
Hence, quantification in $\CSL$ is a first-order quantification, instead of the second-order quantification of $\mathsf{SL}$. %The relation between various fragments of $\mathsf{SL}$ and the corresponding induced fragments of $\mathsf{FOL}$ is explored in \cite{mogavero15}.

In our proof, we take as inspiration the completeness proof for \textit{first-order modal logic} ($\mathsf{FOML}$) with constant domains \cite{Garson1984}. Our construction is quite different, though, as in $\mathsf{FOML}$ variables appear in $n$-ary predicates, and in $\CSL$ variables are placeholders for transition labels.  

\subsection{Axiomatisation of $\CSL$}

\begin{definition}[Axiomatisation]
    The \emph{axiom system} for $\CSL$ consists of the following axiom schemata and rules, where $\vec{t}=t_1,\ldots,t_n$ for $n\geqslant 1$, and $t$ and each $t_i$ are either a variable or a constant. %, and likewise for $t$.
    $$
    %\small
\begin{array}{c
%@{\qquad}
l}

    \mathsf{PC}  & \text{Every propositional tautology}  
  \\
     \mathsf{K} & ( \assign{\vec{t}} \varphi \land \assign{\vec{t}}\psi) \leftrightarrow \assign{\vec{t}}(\varphi \land \psi )
   \\
   \mathsf{\mathsf{N} } & \neg \assign{\Vec{t}}\varphi \leftrightarrow \assign{\vec{t }}\neg \varphi 

   \\
   \mathsf{E} & \forall x \varphi \to \varphi[t/x]
\\
\mathsf{B} & \forall x \assign{\vec t} \varphi \imp \assign{\vec t} \forall x \varphi, \text{s.t. } t_i \neq x \text{ for all } t_i %\text{, where each } t_i \text{ is different from $x$} %\qquad   % x\neq t_i \text{ for all $i\leq n$} 
\\
\mathsf{MP} & \text{From } \varphi, \varphi \to \psi, \text{ infer } \psi
\\
\mathsf{Nec} & \text{From } \varphi, \text{ infer } \assign{\vec t }\varphi
\\
\mathsf{Gen} & \text{From } \varphi\to \psi[t/x] , \text{ infer } \varphi\to \forall x \psi, \text{ if $t \not \in \varphi$} %\text{ if $t$  does not appear in $\varphi$}
\end{array}
$$
\iffalse
\begin{tabular}{p{0.33\textwidth} p{0.33\textwidth} p{0.33\textwidth}}

   \begin{prooftree}
       \AxiomC{$\varphi$}\LeftLabel{$\mathsf{MP}$}
       \AxiomC{$\varphi \to \psi$}
       \BinaryInfC{$\psi$}
   \end{prooftree} 
   &  
  \begin{prooftree}
       \AxiomC{$\varphi$}\LeftLabel{$\mathsf{Nec}$}
         \UnaryInfC{$\assign{\vec t }\varphi $}
   \end{prooftree} 
   
     &  \begin{prooftree}
       \AxiomC{$\varphi\to \psi $}\RightLabel{$x\notin \FV(\varphi  )$}\LeftLabel{$\mathsf{Gen}$}
        \UnaryInfC{$\varphi\to \forall x \psi$}
   \end{prooftree}

\end{tabular}
\fi
\noindent An axiomatic derivation $\pi$ is a finite sequence of formulae $\varphi_1,\ldots, \varphi_m$ where for each $i\leqslant m$:  either $\varphi_i$ is an instance of one of the axiom schemata of $\CSL$, or it is obtained from some preceding formulae in the sequence using rules $\mathsf{MP}$, $\mathsf{Nec}$, or $\mathsf{Gen}$.
We write   $\vdash\varphi$ and say that  $\varphi$ is \emph{$\CSL$ derivable} (or simply derivable)  iff there is a derivation $\pi$ whose last element is $\varphi$. Given a set of formulae $X$, we write $X \vdash \varphi$ iff there is a finite subset $Y $ of $X$ such that %$\vdash_\CSL \bigwedge Y \imp \varphi$. 
$\vdash \bigwedge Y \imp \varphi$.
\end{definition}

We will freely use the following proposition in the rest of the paper. Its proof is standard, and we omit it for brevity. 

\begin{proposition}\label{prop.first}
    The following formulae are $\CSL$ derivable, where $\vec{t}=t_1,\ldots,t_n$, and each $t_i$ is either a constant or a variable: 
    \begin{enumerate}
        \item $\assign{\vec{t}}(\varphi\imp \psi)\imp (\assign{\vec{t}}\varphi) \imp (\assign{\vec{t}}\psi); $
        \item $\forall x (\varphi \imp \psi)\imp (\varphi \imp \forall x \psi)$ with $x\notin \FV(\varphi);$
        \item $\exists z (\varphi \imp \forall y \varphi )$ with $z\notin \FV(\forall y \varphi).$
      \end{enumerate}
      Moreover, if $\varphi\imp \psi$ is derivable, so is $\assign{\vec{t}}\varphi \imp \assign{\vec{t}} \psi $.
\end{proposition}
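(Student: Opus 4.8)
The four items together with the closing rule are the standard ``housekeeping'' lemmas one proves first in any normal modal logic equipped with a classical first-order quantifier. The plan is therefore to derive them in an order that respects their mutual dependence: first the monotonicity rule stated in the \emph{Moreover} clause, then item~1, and finally the two purely quantificational items~2 and~3.

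The engine is the monotonicity rule: if $\vdash \varphi \imp \psi$ then $\vdash \assign{\vec t}\varphi \imp \assign{\vec t}\psi$. I cannot obtain it the usual way (necessitate $\varphi \imp \psi$ and feed it to the implication form of $\mathsf K$), because here $\mathsf K$ is stated in conjunctive form and its implication form is exactly item~1, which would make the argument circular. Instead I would exploit the self-duality axiom $\mathsf N$, i.e.\ $\neg \assign{\vec t}\chi \leftrightarrow \assign{\vec t}\neg\chi$, which is available precisely because the frames are serial and functional. Given $\vdash \varphi \imp \psi$, first derive $\vdash \neg(\varphi \wedge \neg\psi)$ by $\mathsf{PC}$, then $\vdash \assign{\vec t}\neg(\varphi \wedge \neg\psi)$ by $\mathsf{Nec}$, and hence $\vdash \neg \assign{\vec t}(\varphi \wedge \neg\psi)$ by $\mathsf N$. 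On the other hand, the forward direction of $\mathsf K$ gives $\vdash (\assign{\vec t}\varphi \wedge \assign{\vec t}\neg\psi) \imp \assign{\vec t}(\varphi \wedge \neg\psi)$, and rewriting $\assign{\vec t}\neg\psi$ as $\neg\assign{\vec t}\psi$ (again by $\mathsf N$) turns this into $(\assign{\vec t}\varphi \wedge \neg\assign{\vec t}\psi) \imp \assign{\vec t}(\varphi \wedge \neg\psi)$. A propositional \emph{modus tollens} against $\neg\assign{\vec t}(\varphi \wedge \neg\psi)$ then yields $\assign{\vec t}\varphi \imp \assign{\vec t}\psi$. With monotonicity in hand, item~1 is routine: $\mathsf K$ gives $\assign{\vec t}(\varphi\imp\psi) \wedge \assign{\vec t}\varphi \imp \assign{\vec t}((\varphi\imp\psi)\wedge\varphi)$, monotonicity applied to the tautology $((\varphi\imp\psi)\wedge\varphi)\imp\psi$ gives $\assign{\vec t}((\varphi\imp\psi)\wedge\varphi)\imp\assign{\vec t}\psi$, and chaining plus exportation in $\mathsf{PC}$ delivers $\assign{\vec t}(\varphi\imp\psi)\imp(\assign{\vec t}\varphi\imp\assign{\vec t}\psi)$.

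For the quantificational items I would use the instantiation axiom $\mathsf E$ and the generalisation rule $\mathsf{Gen}$. For item~2, pick a term $t$ fresh for $\varphi$ and $\psi$ (so $t \notin \forall x(\varphi\imp\psi)\wedge\varphi$); instantiating $\mathsf E$ and using $x\notin\FV(\varphi)$, so that $\varphi[t/x]=\varphi$, gives $\forall x(\varphi\imp\psi)\imp(\varphi\imp\psi[t/x])$, whence $\forall x(\varphi\imp\psi)\wedge\varphi \imp \psi[t/x]$ by $\mathsf{PC}$; since $t$ is fresh, $\mathsf{Gen}$ yields $\forall x(\varphi\imp\psi)\wedge\varphi \imp \forall x\psi$, and exportation finishes. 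Item~3 is the ``drinker's formula'', and the side condition $z\notin\FV(\forall y\varphi)$ is exactly what is needed to make a quantifier vacuous: it lets $\exists z(\varphi\imp\forall y\varphi)$ reduce to $\forall z\varphi \imp \forall y\varphi$, which is then a theorem once bound-variable conventions identify $\forall z\varphi$ with $\forall y\varphi$. Concretely, I would argue by contraposition: $\neg\exists z(\varphi\imp\forall y\varphi)$ rewrites propositionally to $\forall z(\varphi\wedge\neg\forall y\varphi)$, the derivable distribution of $\forall$ over $\wedge$ splits this into $\forall z\varphi \wedge \forall z\neg\forall y\varphi$, and $\mathsf E$ together with $z\notin\FV(\forall y\varphi)$ collapses the second conjunct to $\neg\forall y\varphi$, contradicting $\forall z\varphi\imp\forall y\varphi$.

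The main obstacle is the very first step: obtaining monotonicity without having the implication form of $\mathsf K$ at one's disposal. The observation that unlocks everything is that seriality and functionality make $\assign{\vec t}$ self-dual (axiom $\mathsf N$), and self-duality is precisely what lets one trade the missing implication form for a \emph{modus tollens} argument against a necessitated tautology; once item~1 is available the remaining modal reasoning is mechanical. Beyond that, the only delicate point is the routine bookkeeping of the freshness and side conditions attached to $\mathsf E$ and $\mathsf{Gen}$ in items~2 and~3.
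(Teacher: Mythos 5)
The paper gives no proof of this proposition (it is declared standard and omitted), so there is nothing to compare against line by line; judged on its own, your treatment of item~1, item~2 and the \emph{Moreover} clause is correct. In particular, the observation that the self-duality axiom $\mathsf{N}$ is what lets you bootstrap the monotonicity rule from the purely conjunctive form of $\mathsf{K}$ --- thereby avoiding the circularity of deriving monotonicity from the implication form of $\mathsf{K}$, which is item~1 itself --- is exactly the right move, and the freshness bookkeeping for $\mathsf{E}$ and $\mathsf{Gen}$ in item~2 is handled properly.

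Item~3 is where there is a genuine gap. You reduce $\exists z(\varphi\imp\forall y\varphi)$ to $\forall z\varphi\imp\forall y\varphi$ and then declare the latter a theorem ``once bound-variable conventions identify $\forall z\varphi$ with $\forall y\varphi$''. They are not alphabetic variants: both quantifiers bind the \emph{same} formula $\varphi$, so when $z\neq y$ and $z\notin\FV(\varphi)$ the prefix $\forall z$ is vacuous while $\forall y$ genuinely quantifies. Concretely, for $n=2$ take $\varphi=\assign{y,b}p$ and $z$ fresh: the side condition $z\notin\FV(\forall y\varphi)$ holds trivially, yet $\forall z\assign{y,b}p\imp\forall y\assign{y,b}p$ collapses to $\assign{y,b}p\imp\forall y\assign{y,b}p$, which fails in $\G_1$ of Figure~\ref{fig::exampleCGM} at $s$ under $y:=a$ (the $ab$-transition stays in the $p$-state but the $bb$-transition does not). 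So no derivation of item~3 as literally printed can exist for fresh $z$; the statement is evidently a slip for the drinker-style formula $\exists z(\varphi[z/y]\imp\forall y\varphi)$, which is the form actually invoked in the proof of Lemma~\ref{lemma:lindy}. For that corrected statement your contraposition strategy does go through: $\neg\exists z(\varphi[z/y]\imp\forall y\varphi)$ yields $\forall z\,\varphi[z/y]$ and $\neg\forall y\varphi$, and the missing link $\forall z\,\varphi[z/y]\imp\forall y\varphi$ is derivable by instantiating $\mathsf{E}$ at a fresh term $t$ (giving $\varphi[t/y]$) and then applying $\mathsf{Gen}$. You should restate item~3 with the substitution and thread it through the argument; as written, your proof of it asserts a false identification at the decisive step.
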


\begin{lemma}\label{lemma:sound}
    Each axiom schema of $\CSL$ is valid and each rule of $\CSL$ preserves validity.%: if the premised of a rule are valid so it is its conclusion. 
\end{lemma}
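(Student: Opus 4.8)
The plan is to verify soundness axiom-by-axiom and rule-by-rule against the satisfaction relation of Definition~7, using the alternative characterisation from Proposition~\ref{prop:altSem} (seriality plus functionality guarantee that $\assign{\vec t}$ behaves as a deterministic box). Recall that by Definition~\ref{def:satopen} validity of an open formula means validity of its closure, so throughout I would reason about an arbitrary CGS $\G$, an arbitrary state $s$, and an arbitrary assignment of constants to the remaining free variables; equivalently, I fix a sentence obtained by closing off, and argue pointwise. The propositional cases ($\mathsf{PC}$, and the $\mathsf{MP}$ rule) are immediate from the Boolean clauses of the satisfaction relation.

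For the modal axioms, the key observation is that, by the Remark following Proposition~\ref{prop:altSem}, for fixed constants $\vec a$ the operator $\assign{\vec a}$ picks out the \emph{unique} successor $t$ with $\tuple{s,\vec a,t}\in R$. Thus $\assign{\vec a}$ is simultaneously a box and a diamond. Soundness of $\mathsf{K}$ then follows because $\G,t\models \varphi\land\psi$ iff $\G,t\models\varphi$ and $\G,t\models\psi$ at that single successor. Soundness of $\mathsf{N}$, namely $\neg\assign{\vec a}\varphi \leftrightarrow \assign{\vec a}\neg\varphi$, is exactly the statement that box and diamond coincide: $\G,s\models\neg\assign{\vec a}\varphi$ iff the unique successor $t$ fails $\varphi$ iff $\G,s\models\assign{\vec a}\neg\varphi$. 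Here I would be careful to first reduce an arbitrary $\assign{\vec t}$ to the case of constants by substituting the assigned values, since the semantic clauses are stated for constants.

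For the quantifier axioms I would use the clause $\G,s\models\forall x\psi$ iff $\G,s\models\psi[a/x]$ for all $a\in\Ac$. Axiom $\mathsf{E}$, $\forall x\varphi\to\varphi[t/x]$, holds because if $\psi$ is true for every instantiation then it is true for the particular constant substituted for $t$ (after closing, $t$ is replaced by some constant, which is one of the $a$'s quantified over). The Barcan-style axiom $\mathsf{B}$, $\forall x\assign{\vec t}\varphi\to\assign{\vec t}\forall x\varphi$ with the side condition $t_i\neq x$, is where the constant-domain nature of the semantics matters: the side condition ensures that instantiating $x$ commutes with taking the (single) $\vec t$-successor, and since every state has successors for every action-tuple and the action domain $\Ac$ is the same at every state, quantifying after the transition is equivalent to quantifying before it. I expect this commutation, and the careful handling of the substitution $\psi[a/x]$ interacting with $\assign{\vec t}$ under the side condition, to be the main obstacle, as it requires making precise that the set of available actions does not vary across states.

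Finally, the rules $\mathsf{Nec}$ and $\mathsf{Gen}$ preserve validity. For $\mathsf{Nec}$: if $\models\varphi$ then $\varphi$ holds at every state of every CGS, in particular at the unique $\vec t$-successor of any state, so $\models\assign{\vec t}\varphi$. For $\mathsf{Gen}$: assuming $\models\varphi\to\psi[t/x]$ with $t$ not occurring in $\varphi$, I would fix a CGS, a state, and an assignment making $\varphi$ true; since $t$ is fresh for $\varphi$, I may let $t$ range over all constants without affecting the truth of $\varphi$, and from the hypothesis conclude $\psi[a/x]$ for every $a$, hence $\forall x\,\psi$, giving $\models\varphi\to\forall x\psi$. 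I would remark that these verifications are routine once the deterministic-box reading of $\assign{\vec t}$ and the constant action domain are in place, and would only spell out $\mathsf{B}$ and $\mathsf{Gen}$ in detail, leaving the Boolean and $\mathsf{K}$/$\mathsf{N}$ cases as straightforward checks.
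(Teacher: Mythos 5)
Your proposal is correct and follows essentially the same route as the paper: a direct case-by-case semantic verification in which seriality and functionality make $\assign{\vec a}$ a deterministic box (so $\mathsf{N}$ and $\mathsf{K}$ are immediate), and the constant action domain across states yields the Barcan-style axiom $\mathsf{B}$, with the remaining axioms and rules handled by routine unfolding of the satisfaction clauses. The paper likewise restricts attention to closed instances and spells out only $\mathsf{N}$ and $\mathsf{B}$ in detail, so your choice of which cases to elaborate matches its proof as well.
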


The proof of Lemma \ref{lemma:sound} is done by the application of the definition of the semantics, and it can found in the Appendix.

\iffalse
\begin{proof}
 For the sake of simplicity, we only consider closed instances of the axiom schemata. Validity of other axiom schemata and the soundness of the rules of inference can be shown similarly.
 
 ($\mathsf{N}$). Suppose that $\G,s\models \neg \assign{\vec{{a}}}\varphi$. By the definition of the semantics, this means that $\G,s\not \models \assign{\vec{{a}}} \varphi$, i.e. for each   $t\in S$ if   $\tuple{s,\vec{a},t}\in R$,  then we have that $\G,t\not\models \varphi$. From the seriality and functionality of $R$, we can conclude that there is exactly one such $t$, and thus $\G,s\models \assign{\vec{{a}}}\neg \varphi $. For the converse direction, suppose that $\G,s\models \assign{\vec{{a}}}\neg \varphi$. This means that there is a $t$ such that $\tuple{s,\vec{a},t}\in R$ and $\G,t\not\models \varphi$. By functionality of $R$ there is no other $t$ related to $s$ by means of $\vec{a}$, and thus we can conclude that $\G,s\models \neg \assign{\vec{{a}}}\varphi$. 

 ($\mathsf{B}$). Assume that $\G,s \models \forall x \assign{\vec t} \varphi$, where $x$ is different from every $t_i$. Since the formula is closed, this is just $\G,s \models \forall x \assign{\vec{ {a}}} \varphi$ for some $\vec{a}\in \mathcal D$. By the truth definition, this is equivalent to $\G,s \models \assign{\vec{{a}}} (\varphi [{b}/x])$ for every $b\in \Ac$, which means $\G,s\models \assign{\vec{ a}} \forall x \varphi $.
%\begin{description}
 %   \item[($\mathsf{N}$)] Suppose that $\G,s\models \neg \assign{\vec{{a}}}\varphi$. By the definition of the semantics, this means that $\G,s\not \models \assign{\vec{{a}}} \varphi$, i.e. for each   $t\in S$ if   $\tuple{s,\vec{a},t}\in R$,  then we have that $\G,t\not\models \varphi$. From the seriality and functionality of $R$, we can conclude that there is exactly one such $t$, and thus $\G,s\models \assign{\vec{{a}}}\neg \varphi $. For the converse direction, suppose that $\G,s\models \assign{\vec{{a}}}\neg \varphi$, this means that there is an $t$ such that $\tuple{s,\vec{a},t}\in R$ and $\G,t\not\models \varphi$. By functionality of $R$ there is no other $t$ related to $s$ by means of $\vec{a}$, and thus we can conclude that $\G,s\models \neg \assign{\vec{{a}}}\varphi$. 
  %  \item[($\mathsf{B}$)] Assume that $\G,s \models \forall x \assign{\vec t} \varphi$, where $x$ is different from every $t_i$. Since the formula is closed, this is just $\G,s \models \forall x \assign{\vec{ {a}}} \varphi$ for some $\vec{a}\in \mathcal D$. By the truth definition, this is equivalent to $\G,s \models \assign{\vec{{a}}} (\varphi [{b}/x])$ for every $b\in \Ac$, which means $\G,s\models \assign{\vec{ a}} \forall x \varphi $. \qedhere 
%\end{description}
\end{proof}
\fi

Our completeness proof is based on the canonical model construction, where states are maximal consistent sets with the $\forall$-property.

\begin{definition}[Maximal Consistent Sets]
Let $Z$ be a set of $\CSL$ sentences over a given signature and $X\subseteq Z$. We say that: (i)$X$ is \emph{consistent} iff $X\not\vdash \bot$, (ii) $X$ is \emph{maximally consistent} (MCS) iff it is consistent and there is no other consistent set of sentences $Y\subseteq Z$ s.t. $X\subset Y$, %that strictly includes $X$ 
and (iii) $X$ has the \emph{$\forall$-property} iff for every formula $\varphi$ over the same signature as $Y$  and variable $x$, there is a constant $a$ such that $\varphi[a/x]\to \forall x \varphi\in X$, where $\varphi[a/x]$ is closed. We will call a set satisfying all the three requirements $\forall$-MCS.

\end{definition}

Let $\alpha=\tuple{n,\mathcal{C},\Ap}$ be a signature. We denote by $\alpha^\star$ the signature $\tuple{n,\mathcal{C}\cup \mathcal{C}^\star,\Ap}$ where $\mathcal{C}^\star$ is countably infinite, and $\mathcal{C}\cap \mathcal{C}^\star = \emptyset$. 

Next lemma shows  that each consistent set of sentences over a given signature $\alpha$  can be extended to a consistent set of sentences over $\alpha^\star$ having the $\forall$-property.  Its proof follows the standard technique in $\mathsf{FOML}$ \cite{Cresswell1996-CREANI-3}, and can be found in the Appendix. %we report it here for the sake of completeness. 

\begin{lemma}\label{lemma:expConst}
    If $X$ is a consistent set of sentences over a given signature $\alpha$, then there is a consistent set of sentences $Y$ over $\alpha^\star$ such that $X\subseteq Y$, and $Y$ has the $\forall$-property.  
\end{lemma}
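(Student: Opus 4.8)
The plan is to construct $Y$ as the union of an increasing chain of consistent sets, adding witnesses for universal formulae one at a time using the fresh constants from $\mathcal{C}^\star$. First I would enumerate all pairs $(\varphi, x)$ where $\varphi$ is a $\CSL$ formula over $\alpha^\star$ and $x$ a variable such that $\forall x \varphi$ is a sentence; since $\mathcal{C} \cup \mathcal{C}^\star$ and the set of formulae are countable, such an enumeration $(\varphi_0, x_0), (\varphi_1, x_1), \ldots$ exists. I set $X_0 = X$ and, at stage $k+1$, given a consistent $X_k$, I pick a constant $a_k \in \mathcal{C}^\star$ that does not appear in $\varphi_k$ nor in any formula used so far in $X_k$ (possible since only finitely many constants have been consumed at each finite stage and $\mathcal{C}^\star$ is infinite), and set
\[
X_{k+1} = X_k \cup \{\varphi_k[a_k/x_k] \to \forall x_k \varphi_k\}.
\]
This directly builds in the $\forall$-property witness for the pair $(\varphi_k, x_k)$.

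The central step is to verify that each $X_{k+1}$ remains consistent, given that $X_k$ is. I would argue by contradiction: if $X_{k+1}$ were inconsistent, then since $X_k$ is consistent, there is a finite conjunction $\chi$ of members of $X_k$ such that $\vdash \chi \to \neg(\varphi_k[a_k/x_k] \to \forall x_k \varphi_k)$, equivalently $\vdash \chi \to (\varphi_k[a_k/x_k] \wedge \neg \forall x_k \varphi_k)$. From this I extract $\vdash \chi \to \varphi_k[a_k/x_k]$ and $\vdash \chi \to \neg \forall x_k \varphi_k$. Because $a_k$ is fresh—it occurs neither in $\chi$ nor in $\varphi_k$—the generalisation rule $\mathsf{Gen}$ applies to the first derivation, yielding $\vdash \chi \to \forall x_k \varphi_k$. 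Combined with $\vdash \chi \to \neg \forall x_k \varphi_k$ this forces $\vdash \neg \chi$, contradicting the consistency of $X_k$. The freshness condition on $a_k$ is precisely what licenses the use of $\mathsf{Gen}$, so I expect the main obstacle to be the careful bookkeeping ensuring $a_k$ does not appear in the finite set of $X_k$-formulae entering $\chi$; this is handled by noting that at stage $k$ only finitely many constants from $\mathcal{C}^\star$ have been introduced, so a fresh one always remains.

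Finally, I would take $Y = \bigcup_{k} X_k$. Consistency of $Y$ follows from a standard compactness-style argument: any derivation of $\bot$ from $Y$ uses only finitely many premises, hence all lie in some $X_k$, which is consistent. By construction $X \subseteq X_0 \subseteq Y$, and for every pair $(\varphi, x)$ appearing in the enumeration as $(\varphi_k, x_k)$, the witnessing formula $\varphi_k[a_k/x_k] \to \forall x_k \varphi_k$ lies in $X_{k+1} \subseteq Y$, so $Y$ has the $\forall$-property. Since all added sentences use constants from $\mathcal{C} \cup \mathcal{C}^\star$ and are sentences (the substitution $\varphi_k[a_k/x_k]$ is closed by choice of enumeration together with $\forall x_k \varphi_k$ being closed), $Y$ is a consistent set of sentences over $\alpha^\star$ with the required properties.
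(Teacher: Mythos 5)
Your proposal is correct and follows essentially the same route as the paper's own proof: the same chain construction $X_0 = X$, $X_{k+1} = X_k \cup \{\varphi_k[a_k/x_k] \to \forall x_k \varphi_k\}$ with a fresh constant from $\mathcal{C}^\star$, the same consistency argument via $\mathsf{Gen}$ applied to the fresh witness, and the same union at the end. No substantive differences.
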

\iffalse
\begin{proof}
    Let $E$ be an enumeration of sentences of the form $\forall x \varphi$ over $\alpha^\star$. We define a sequence of sets of sentences $Y_0,Y_1,\ldots$ with $Y_0=X$ and $Y_{n+1}=Y_n \cup \set{ \varphi[a/x]\to \forall x \varphi}$ where $\forall x \varphi$ is the $n+1$-th sentence in $E$, and $a$ is the first constant in the enumeration occurring neither in $Y_n$ nor in $\varphi$. Since $Y_0$ is over $\alpha$, $Y_n$ is obtained by the addition of $n$ sentences over $\alpha^\star$, and $\alpha^\star$ includes a countably infinite set of new constants, we can always find such an $a$. 
    
    Now we show that $Y_{n+1}$ constructed in the described way is consistent. For this, assume towards a contradiction that $Y_n$ is consistent and $Y_{n+1}$ is not. This means that there is a finite set of sentences $U\subseteq Y_n$ such that $U\cup \set{\varphi [a/x]\to \forall x \varphi}\vdash \bot$. By the rules of $\CSL$ we thus obtain that (i) $U\vdash \varphi [a/x]$ and (ii) $U\vdash \neg \forall x \varphi$. Since $a$ does not appear in $Y_n$, we can use the  $\mathsf{Gen}$ rule of inference and conclude that $U\vdash \forall x \varphi$. In conjunction with (ii) this amounts to the fact that $Y_n$ is not consistent, and hence we arrive at a contradiction.
    %and thus, because of (ii), that $U\vdash \bot$ which contradicts that $Y_n$ is consistent. 
    
    Define $Y$ as $\bigcup_{n\in \mathbb{N} }Y_n$. It is now easy to see that $Y$ is consistent and has the $\forall$-property. 
\end{proof}
\fi

The proof of the following lemma (Lindenbaum Lemma) is standard, and we omit it for brevity.

\begin{lemma}
\label{lemma:mcs}
Let $X$ be a consistent set of sentences 
over a given signature, then there is %a maximal consistent set of sentences 
an MCS $Y$ over the same signature such that $X\subseteq Y$.
\end{lemma}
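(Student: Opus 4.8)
The plan is to prove the Lindenbaum Lemma (Lemma \ref{lemma:mcs}) by the standard maximal extension argument, adapted to the present setting where we work with \emph{sentences} over a fixed signature. First I would fix an enumeration $\varphi_0, \varphi_1, \ldots$ of all $\CSL$ sentences over the given signature; since the set of constants $\mathcal{C}$ and the set of atoms $\Ap$ are countable, there are only countably many such sentences, so such an enumeration exists. I would then build an increasing chain of consistent sets $X = Y_0 \subseteq Y_1 \subseteq \cdots$, setting $Y_{k+1} = Y_k \cup \{\varphi_k\}$ whenever this set is consistent, and $Y_{k+1} = Y_k$ otherwise. Finally I would take $Y = \bigcup_{k \in \mathbb{N}} Y_k$ and verify that $Y$ is the desired MCS with $X \subseteq Y$.

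The verification splits into three parts. First, each $Y_k$ is consistent: this is an easy induction on $k$, the base case being the hypothesis that $X$ is consistent and the inductive step being immediate from the definition of the construction. Second, $Y$ itself is consistent: here I would use that derivability, by the definition $X \vdash \varphi$ in terms of a \emph{finite} subset $Y' \subseteq X$ with $\vdash \bigwedge Y' \imp \varphi$, is \emph{finitary}. Hence if $Y \vdash \bot$, the derivation would use only finitely many formulae of $Y$, all of which lie in some single $Y_k$ by the chain property, contradicting the consistency of that $Y_k$. Third, $Y$ is maximal: for any sentence $\varphi$ over the signature, $\varphi = \varphi_k$ for some $k$, and either $\varphi \in Y_{k+1} \subseteq Y$, or $Y_k \cup \{\varphi\}$ was inconsistent; in the latter case, since $Y_k \subseteq Y$, adding $\varphi$ to $Y$ would also be inconsistent, so $\varphi$ cannot be consistently added to $Y$. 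Thus no consistent proper superset of $Y$ exists, which is exactly maximality.

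There is no serious obstacle here; the argument is entirely routine, which is precisely why the authors omit it. The only point requiring a modicum of care is the appeal to finitariness of derivations when passing from the consistency of every $Y_k$ to the consistency of the union $Y$. This relies on the fact that any proof of $\bot$ invokes only finitely many premisses, a property built into the definition of $X \vdash \varphi$ given earlier. One should also note that the construction preserves the signature: every $\varphi_k$ in the enumeration, and hence every formula added, is a sentence over the same signature as $X$, so $Y$ is indeed an MCS over that signature as required by the statement.
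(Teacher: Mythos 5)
Your proof is correct and is exactly the standard Lindenbaum construction that the paper invokes when it says the proof "is standard" and omits it: enumerate the countably many sentences over the signature, extend stepwise preserving consistency, take the union, and use finitariness of $\vdash$ for consistency of the union and the construction itself for maximality. No gaps.
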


The next two lemmas will be instrumental in the proof of the Truth Lemma, and showing that the canonical model we are to define in this proof is indeed a CGS.

\begin{lemma}\label{lemma:diamond}
Let $X$ be a consistent set of sentences over a given signature and let $\vec{a}$ be a tuple of constants,   
 then the set $Y_{\vec{a}}=\set{\psi\mid \assign{\vec{a}}\psi\in X}$ is also consistent.
\end{lemma}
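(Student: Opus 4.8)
The plan is to establish the contrapositive: assuming $Y_{\vec a}$ is inconsistent, I derive that $X$ is inconsistent, contradicting the hypothesis. First I observe that the statement is well-posed: since $\vec a$ consists only of constants, we have $\FV(\assign{\vec a}\psi) = \FV(\psi)$, so whenever $\assign{\vec a}\psi \in X$ (and $X$ is a set of sentences) the formula $\psi$ is itself a sentence. Hence $Y_{\vec a}$ is a genuine set of sentences over the same signature, and its consistency is meaningful.

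Suppose for contradiction that $Y_{\vec a} \vdash \bot$. By the definition of derivability from a set, there is a finite subset $\set{\psi_1, \ldots, \psi_k} \subseteq Y_{\vec a}$ with $\vdash (\psi_1 \land \cdots \land \psi_k) \to \bot$, which by $\mathsf{PC}$ is equivalent to $\vdash \neg(\psi_1 \land \cdots \land \psi_k)$.

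The heart of the argument is to transport this derivable inconsistency through the modality $\assign{\vec a}$. Applying the necessitation rule $\mathsf{Nec}$ to $\vdash \neg(\psi_1 \land \cdots \land \psi_k)$ gives $\vdash \assign{\vec a}\neg(\psi_1 \land \cdots \land \psi_k)$. Axiom $\mathsf{N}$ then rewrites this as $\vdash \neg \assign{\vec a}(\psi_1 \land \cdots \land \psi_k)$. A routine induction on $k$, using axiom $\mathsf{K}$ to distribute the modality over binary conjunction at each step, shows that $\assign{\vec a}(\psi_1 \land \cdots \land \psi_k)$ is derivably equivalent to $\assign{\vec a}\psi_1 \land \cdots \land \assign{\vec a}\psi_k$; therefore $\vdash \neg(\assign{\vec a}\psi_1 \land \cdots \land \assign{\vec a}\psi_k)$.

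To close the argument, I invoke the definition of $Y_{\vec a}$: each $\psi_i \in Y_{\vec a}$ means precisely that $\assign{\vec a}\psi_i \in X$. Consequently $X$ derives the conjunction $\assign{\vec a}\psi_1 \land \cdots \land \assign{\vec a}\psi_k$, and combining this with the derivable negation of the same conjunction yields $X \vdash \bot$, contradicting the consistency of $X$. The only step that is not an immediate one-line application is the iterated use of $\mathsf{K}$ to distribute $\assign{\vec a}$ over an arbitrary finite conjunction, but this is a standard induction and poses no real obstacle; everything else reduces to direct applications of $\mathsf{Nec}$, $\mathsf{N}$, and propositional reasoning.
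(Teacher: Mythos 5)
Your proof is correct and follows essentially the same route as the paper's: both transport the derivable inconsistency of a finite subset of $Y_{\vec a}$ through the modality $\assign{\vec a}$ using the normality principles ($\mathsf{Nec}$, $\mathsf{K}$, $\mathsf{N}$) and then contradict the consistency of $X$. The only cosmetic difference is that the paper singles out one element $\varphi \in Y_{\vec a}$ and derives $X \vdash \neg\assign{\vec a}\varphi$ to contradict $\assign{\vec a}\varphi \in X$, whereas you negate the whole boxed conjunction directly; both are sound.
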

\begin{proof}
If $Y_{\vec{a}}$ is empty, the result is trivially valid. Assume that $\varphi \in Y_{\vec{a}}$ and  suppose
    towards a contradiction, that set $Y_{\vec{a}}$ is not consistent. This implies that $(\psi_1\land \cdots \land \psi_m)\to \neg \varphi$ for some finitelty many $\psi_1,\ldots, \psi_m$ in $Y_{\vec{a}}$. Using Proposition \ref{prop.first}(1) and propositional reasoning, we can then derive $\assign{\vec{a}}\psi_1 \land \cdots \land \assign{\vec a}\psi_m\imp \assign{\vec a} \neg \varphi  $. Since $\assign{\vec{a}}\psi_i\in X$, we conclude by $\mathsf{MP}$ that $X \vdash \assign{\vec a} \neg \varphi$. Then, by $\mathsf{N}$ and $\mathsf{MP}$  we can further derive $X\vdash \neg \assign{\vec{a}}\varphi$, which contradicts $\assign{\vec a} \varphi\in X$. 
\end{proof}

%The following  lemma will be used to show that the canonical model is a CGS and in the proof of the truth lemma to follow. 

\begin{lemma}
\label{lemma:lindy}
%Let $X$ be a maximal consistent set of sentences with the $\forall$-property over a given signature containing infinitely many constants. Then there exists a consistent set of sentences $Y$ over the same signature that has the $\forall$-property and such that $Z=\set{\psi \mid \assign{\vec{a}} \psi \in X} \subseteq Y$.
Let $X$ be a $\forall$-MCS over a given signature containing infinitely many constants. Then there exists a $\forall$-MCS $Y$ over the same signature such that $Z=\set{\psi \mid \assign{\vec{a}} \psi \in X} \subseteq Y$.
\end{lemma}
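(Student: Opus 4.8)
The plan is to verify that $Z$ is consistent, to argue that $Z$ \emph{already} satisfies the $\forall$-property within the given signature (so that, unlike in Lemma~\ref{lemma:expConst}, no fresh constants are introduced), and then to close $Z$ under Lindenbaum while checking that the $\forall$-property survives the extension. First, since a $\forall$-MCS is in particular consistent, Lemma~\ref{lemma:diamond} directly gives that $Z=\set{\psi\mid \assign{\vec a}\psi\in X}$ is consistent.

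The heart of the argument is to show that $Z$ has the $\forall$-property. Fix a formula $\varphi$ whose only possible free variable is $x$ (so that $\varphi[b/x]$ is closed), and consider $\assign{\vec a}\varphi$, which by the definition of $\FV$ has the same free variable $x$, since the label $\vec a$ consists of constants. Applying the $\forall$-property of $X$ to $\assign{\vec a}\varphi$ yields a constant $b$ with $(\assign{\vec a}\varphi)[b/x]\imp \forall x\,\assign{\vec a}\varphi\in X$. Because substitution commutes with the modality (the label $\vec a$ contains no variables), the antecedent is exactly $\assign{\vec a}(\varphi[b/x])$. I then combine this with the Barcan axiom $\mathsf B$, namely $\forall x\,\assign{\vec a}\varphi\imp \assign{\vec a}\forall x\varphi$, using that $X$ is deductively closed, to obtain $\assign{\vec a}(\varphi[b/x])\imp \assign{\vec a}\forall x\varphi\in X$. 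Finally, since $\assign{\vec a}$ is self-dual and normal (axioms $\mathsf N$ and $\mathsf K$), it distributes over implication, so $\assign{\vec a}(\varphi[b/x]\imp\forall x\varphi)\in X$; that is, $\varphi[b/x]\imp\forall x\varphi\in Z$, which is the required witness.

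Having established that $Z$ is consistent and has the $\forall$-property, I would invoke Lemma~\ref{lemma:mcs} (Lindenbaum) to extend $Z$ to an MCS $Y$ over the same signature. The $\forall$-property then transfers for free: for each $\forall x\varphi$ the witness $\varphi[b/x]\imp\forall x\varphi$ already belongs to $Z\subseteq Y$. Hence $Y$ is a $\forall$-MCS with $Z\subseteq Y$, as required.

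The main obstacle is the $\forall$-property step. The naive route would apply Lemma~\ref{lemma:expConst} to $Z$, but that introduces fresh constants and changes the signature, whereas here $Y$ must live over the \emph{same} signature as $X$. What makes this possible is precisely the hypothesis that the signature contains infinitely many constants together with the Barcan axiom $\mathsf B$ and the self-duality of $\assign{\vec a}$: these let us pull witnesses already present in $X$ through the modality into $Z$, so that no new constants are needed and the ordinary Lindenbaum extension suffices.
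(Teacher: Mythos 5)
Your proof is correct, but it takes a genuinely different route from the paper's. The paper runs a Henkin-style construction: it enumerates the sentences $\forall x\,\xi$, builds a chain $\theta_0,\theta_1,\ldots$ by conjoining one witness $\xi[a/x]\imp\forall x\,\xi$ at a time, and shows by contradiction that a constant keeping $Z\cup\set{\theta_{n+1}}$ consistent always exists --- that contradiction argument itself routes through the $\forall$-property of $X$, axiom $\mathsf B$, and the derivable $\exists z(\xi[z/x]\imp\forall x\,\xi)$ --- and only then extends to an MCS. You instead observe that $Z$ \emph{already} has the $\forall$-property: you take $X$'s witness for $\assign{\vec a}\varphi$, note that $(\assign{\vec a}\varphi)[b/x]=\assign{\vec a}(\varphi[b/x])$ because the label is a tuple of constants, compose with $\mathsf B$ inside the deductively closed $X$, and push the resulting implication under the modality; plain Lindenbaum then finishes. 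This is shorter and avoids the stage-by-stage consistency bookkeeping, while relying on the same essential ingredients ($\mathsf B$ plus functionality of the modality plus the $\forall$-property of $X$). Two small points you should make explicit. First, the direction you actually need, $(\assign{\vec a}\alpha\imp\assign{\vec a}\beta)\imp\assign{\vec a}(\alpha\imp\beta)$, is \emph{not} a consequence of normality alone --- for a generic normal box it fails; here it is derivable because the modality is self-dual, by rewriting $\alpha\imp\beta$ as $\neg(\alpha\land\neg\beta)$ and pushing the negations through with $\mathsf N$ and the conjunction with $\mathsf K$. Spelling out that derivation would close the only step a reader might query. Second, your closing sentence credits the ``infinitely many constants'' hypothesis, but your argument never invokes it directly: every witnessing constant you need is already supplied by the $\forall$-property of $X$ over the same signature.
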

\begin{proof}
If $Z = \emptyset$, the lemma trivially holds. %there is nothing to prove. 
So assume $\varphi \in Z$, and let $E$ be an enumeration of all sentences of the form $\forall x \xi$, and $C$ an enumeration of the constants in the given signature. We define a sequence of sentences $\theta_0, \theta_1, \ldots$ where $\theta_0 = \varphi$, and given $\theta_n$, we set $\theta_{n+1} = \theta_n \land (\xi[a/x] \imp \forall x \xi)$, where $\forall x \xi$ is the $(n+1)$-th formula in $E$, and $a$ is the first constant in $C$ such that $$(\star)\quad Z \cup \set{\theta_n \land (\xi[a/x] \imp \forall x \xi)} \text{ is consistent.}$$

Let $Y = Z \cup \set{\theta_n \mid n \in \mathbb{N}}$. Clearly $Y$ has the $\forall$-property, and it is consistent if $Z \cup \set{\theta_n}$ is consistent for every $n \in \mathbb{N}$. To show this, we prove that if $Z \cup \set{\theta_n}$ is consistent, then there always exists a constant $a$ satisfying $(\star)$. The set $Z \cup \set{\theta_0} = Z$ is consistent by Lemma \ref{lemma:diamond}. %, this suffices to conclude the proof.

Suppose, towards a contradiction, that $Z \cup \set{\theta_n}$ is consistent but for every constant $a$, the set $Z \cup \set{\theta_n \land (\xi[a/x] \imp \forall x \xi)}$ is inconsistent. Then for each constant $a$, there exist finitely many formulas $\psi_1^a, \ldots, \psi_m^a$ in $Z$ such that $(\psi_1^a \land \cdots \land \psi_m^a) \imp (\theta_n \imp \neg(\xi[a/x] \imp \forall x \xi))$ is derivable. From this, by the rules of $\CSL$, it follows that $(\assign{\vec{a}} \psi_1^a \land \cdots \land \assign{\vec{a}} \psi_m^a) \imp \assign{\vec{a}}(\theta_n \imp \neg(\xi[a/x] \imp \forall x \xi))$ is derivable. Since $\psi_i^a \in Z$ implies $\assign{\vec{a}} \psi_i^a \in X$, we conclude that (i) $\assign{\vec{a}}(\theta_n \imp \neg(\xi[a/x] \imp \forall x \xi)) \in X$ for every constant $a$.

Let $z$ be a variable that occurs neither in $\theta_n$ nor in $\xi$. Consider the sentence $\forall z\, \assign{\vec{a}}(\theta_n \imp \neg(\xi[z/x] \imp \forall x \xi))$. From the $\forall$-property of $X$ and (i), it follows that $\forall z\, \assign{\vec{a}}(\theta_n \imp \neg(\xi[z/x] \imp \forall x \xi)) \in X$. By axiom $\mathsf{B}$, this implies $\assign{\vec{a}} \forall z (\theta_n \imp \neg(\xi[z/x] \imp \forall x \xi)) \in X$, and thus, by (2) of Prop. \ref{prop.first}, we have (ii) $\assign{\vec{a}}(\theta_n \imp \forall z \neg(\xi[z/x] \imp \forall x \xi)) \in X$. Since $\exists z (\xi[z/x] \imp \forall x \xi)$ is derivable in $\CSL$, applying rule $\mathsf{Nec}$ gives $\assign{\vec{a}} \exists z (\xi[z/x] \imp \forall x \xi) \in X$. From this, together with (ii), and using Proposition \ref{prop.first}, %and contraposition, 
we conclude $\assign{\vec{a}} \neg \theta_n \in X$. By the construction of $Z$, this implies $\neg \theta_n \in Z$, which contradicts the assumption that $Z \cup \set{\theta_n}$ is consistent.
\end{proof}

%\begin{lemma}
 %   Let $X$ be any consistent set of formulas that are constructed over a set $Ac$ of actions. There is a set $Y$ of formulas such that 
%\end{lemma}

%\begin{lemma}
%Let $\Gamma$ be a consistent set of formulas of $\CSL$. Then there is a consistent set $\Delta$ of formulas of $\CSL^+$ such that $\Delta$ has the $\forall$-property and $\Gamma \subseteq \Delta$, and where $\CSL^+$ is the extension of $\CSL$ with a new countably infinite set of variables. 
%\end{lemma}

%Hi!

%Next, we need to show that if some formula $\assign{\Vec{t}} \varphi \not \in \Gamma$, then there is a witness set $\Delta$ such that $\lnot \varphi \in \Delta$.

%begin{lemma}
%Let $\Gamma$ be an MCS$^\forall$, and $\assign{\Vec{t}} \varphi \not \in \Gamma$, where $\Vec{t}$ contains only constants. Then there exists an MCS$^\forall$ $\Delta$ over $\CSL^+$ such that $\assign{\Vec{t}} \Gamma \cup \{\lnot \varphi\} \subseteq \Delta$.
%\end{lemma}

%\begin{proof}
    
%\end{proof}

\begin{definition}[Canonical Model]
\label{def:can_model}
Given a signature $\alpha=\tuple{n,\mathcal{C}, \Ap}$, 
the canonical model over $\alpha$ is the tuple $\G^C=\tuple{n, \Ac^C,\mathcal{D}^C, S^C, R^C,$    $\mathcal{V}^C}$, where: 

\begin{itemize}
    %\item each member of $S^C$ is a maximally consistent set of sentences over $\alpha^\star$ having the $\forall$-property; 
    %\item the set $\Ac^C$ is just $\mathcal{C}\cup \mathcal{C}^\star$;
    \item $\Ac^C=\mathcal{C}\cup \mathcal{C}^\star$;
   
    %\item $\mathcal{D}^C$ is ${Ac^C}^n $;
    \item $\mathcal{D}^C = {Ac^C}^n $;
%     \item $S^C = \{X \mid X \text{ is a maximally consistent set of sentences}$ $\text{ over }$ $\alpha^\star \text{ with the } \forall\text{-property}\}$;
 \item $S^C = \{X \mid X \text{ is a $\forall$-MCS over }$ $\alpha^\star\}$;
    \item for every $\vec{a}\in \mathcal{D}^C$, $\tuple{X,\vec{a}, Y}\in R^C$ iff for every sentence $\varphi$  we have that $\varphi\in Y$ implies $\assign{\vec{a}}\varphi \in X$;
    \item $X \in \mathcal{V}^C(p)$ iff $p\in X$ for all $p \in \Ap$. 
\end{itemize}
\end{definition}

The proof of the next proposition is in the Appendix.

\begin{proposition}\label{prop:existforall}
    For all states $X, Y \in S^C$ %of the canonical model, 
    and for every decision $\vec{a} \in \mathcal{D}^C$, it holds that $\tuple{X,\vec{a},Y}\in R^C$ iff for every sentence $\varphi$, $\assign{\vec{a}}\varphi\in X $ implies $\varphi\in Y$
\end{proposition}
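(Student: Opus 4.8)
The plan is to unfold the definition of $R^C$ and observe that $\tuple{X,\vec a,Y}\in R^C$ is, by definition, exactly condition $(A)$: for every sentence $\varphi$, $\varphi \in Y$ implies $\assign{\vec a}\varphi \in X$. The proposition then reduces to showing that $(A)$ is equivalent to condition $(B)$: for every sentence $\varphi$, $\assign{\vec a}\varphi \in X$ implies $\varphi \in Y$. The whole argument hinges on axiom $\mathsf{N}$, which reflects the functionality of the canonical frame and makes the modality $\assign{\vec a}$ self-dual, together with the fact that $X$ and $Y$ are $\forall$-MCS and hence maximal consistent, so that for every sentence $\psi$ exactly one of $\psi,\neg\psi$ belongs to each of them, and each set is closed under derivability.

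For the direction $(A)\Rightarrow(B)$, I would assume $(A)$ and $\assign{\vec a}\varphi \in X$, and argue by contradiction that $\varphi \in Y$. If $\varphi \notin Y$, maximality of $Y$ gives $\neg\varphi \in Y$, whence $(A)$ yields $\assign{\vec a}\neg\varphi \in X$. Axiom $\mathsf{N}$, together with $\mathsf{MP}$ and the deductive closure of the MCS $X$, turns this into $\neg\assign{\vec a}\varphi \in X$, contradicting $\assign{\vec a}\varphi \in X$ and the consistency of $X$. Hence $\varphi \in Y$, establishing $(B)$.

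The converse direction $(B)\Rightarrow(A)$ is completely symmetric. Assuming $(B)$ and $\varphi \in Y$, suppose for contradiction that $\assign{\vec a}\varphi \notin X$, so $\neg\assign{\vec a}\varphi \in X$ by maximality of $X$. Axiom $\mathsf{N}$ rewrites this as $\assign{\vec a}\neg\varphi \in X$, and $(B)$ then delivers $\neg\varphi \in Y$, contradicting $\varphi \in Y$ and the consistency of $Y$. Thus $\assign{\vec a}\varphi \in X$, which is $(A)$.

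There is no genuine obstacle here; the only point worth flagging is that the entire equivalence is powered by axiom $\mathsf{N}$, i.e.\ by the functionality (and seriality) of the frames. In an ordinary, non-functional modal setting the canonical relation defined via ``$\varphi \in Y$ implies $\assign{\vec a}\varphi \in X$'' would not in general coincide with its dual formulation; it is precisely the self-duality of $\assign{\vec a}$ encoded by $\mathsf{N}$ that makes the two conditions interchangeable. This is exactly why the definition of $R^C$ could be stated in the ``backward'' form and still yield the ``forward'' clause needed for the Truth Lemma.
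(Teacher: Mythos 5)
Your proof is correct and follows what is essentially the only available route: unfolding the definition of $R^C$ and using axiom $\mathsf{N}$ together with the maximality, consistency, and deductive closure of the MCSs to show that the ``backward'' and ``forward'' formulations of the canonical relation coincide. This is the same argument as the paper's (relegated to its appendix), and your remark that the self-duality of $\assign{\vec a}$ encoded by $\mathsf{N}$ is what powers the equivalence is exactly the right observation.
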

%\begin{proof}
 %   Left-to-right: suppose $\tuple{X,\vec{a},Y}\in R^C$  and $\varphi\not\in Y$. We need to show that $\assign{\vec{a}}\varphi\notin X$. Since $Y$ is maximally consistent, we have that $\neg\varphi\in Y$. From the fact that $\tuple{X\,\vec{a},Y} \in R^C$ it follows, by Definition \ref{def:can_model}, that $\assign{\vec a}\neg \varphi\in X$. Since $X$ is maximally consistent, we have that $\neg\assign{\vec a} \neg \varphi \not\in  X$, which implies, by axiom $\mathsf{N}$, that $\assign{\vec{a}}\varphi\notin X$. 
    
  %  Right-to-left: we again reason by contraposition. Suppose that $\tuple{X,\vec{a},Y}\notin R^C$, and thus, by the construction of the canonical model, there is a formula $\varphi \in Y$ such that $\assign{\vec a}\varphi\notin X$. Since $X$ is maximally consistent,  we have that $\neg\assign{\vec a}\varphi\in X$. By the axiom $\mathsf{N}$ we get $\assign{\vec{a}}\neg \varphi \in X$. Thus $\assign{\vec{a}}\neg \varphi \in X$ and $\neg\varphi \not\in Y$ as required for the proof. 
%\end{proof}

Now we are ready to show that $\G^C$ is indeed a CGS (proof in the Appendix), and then prove the Truth Lemma. %The proof of the next proposition is relatively straightforward (see the Appendix).

\begin{proposition}
%Let $\G^C$ be the canonical model over a given signature $\alpha$. Then $\G^C$ is a CGS 
The canonical model $\G^C$ is a CGS.
\end{proposition}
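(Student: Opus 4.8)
The plan is to verify that the canonical model $\G^C=\tuple{n,\Ac^C,\mathcal{D}^C,S^C,R^C,\mathcal{V}^C}$ satisfies every clause in the Definition of a CGS, namely that it is constructed over the appropriate signature and that $\tuple{\mathcal{D}^C,S^C,R^C}$ forms a \emph{serial} and \emph{functional} Kripke frame. The easy parts are immediate: $\Ac^C=\mathcal{C}\cup\mathcal{C}^\star$ is nonempty and countable, $\mathcal{D}^C={\Ac^C}^n$ is the set of $n$-tuples of actions (so $|\mathcal{D}^C|=|\Ac^C|^n$ as required), $S^C$ is the set of $\forall$-MCSs over $\alpha^\star$ (nonempty by Lemmas \ref{lemma:expConst} and \ref{lemma:mcs}, since the empty set is consistent and extends to a $\forall$-MCS), $\Sigma\cap S=\emptyset$ holds because actions and MCSs are objects of different kinds, and $\mathcal{V}^C$ is a valuation $\Ap\to\mathcal{P}(S^C)$ by construction. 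So the substantive work is establishing \textbf{seriality} and \textbf{functionality} of $R^C$.

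For \textbf{seriality}, I would fix an arbitrary $X\in S^C$ and an arbitrary decision $\vec a\in\mathcal{D}^C$, and exhibit a $\forall$-MCS $Y$ with $\tuple{X,\vec a,Y}\in R^C$. The natural candidate is built from the set $Z=\set{\psi\mid\assign{\vec a}\psi\in X}$. By Lemma \ref{lemma:lindy}, since $X$ is a $\forall$-MCS over a signature with infinitely many constants, there exists a $\forall$-MCS $Y$ with $Z\subseteq Y$. It then remains to check that $\tuple{X,\vec a,Y}\in R^C$, i.e. that $\varphi\in Y$ implies $\assign{\vec a}\varphi\in X$; this is exactly the content captured by Proposition \ref{prop:existforall}, whose other direction ($\assign{\vec a}\varphi\in X$ implies $\varphi\in Y$) follows from $Z\subseteq Y$, so seriality reduces to a direct invocation of the preceding machinery.

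For \textbf{functionality}, I would fix $X$ and $\vec a$ and suppose $\tuple{X,\vec a,Y}\in R^C$ and $\tuple{X,\vec a,Y'}\in R^C$, aiming to show $Y=Y'$. Here the key tool is axiom $\mathsf{N}$, which makes the ne$\mathsf{X}$t-modality self-dual ($\neg\assign{\vec a}\varphi\leftrightarrow\assign{\vec a}\neg\varphi$) and thus encodes that each decision determines a unique successor at the syntactic level. Concretely, take any sentence $\varphi\in Y$; by Proposition \ref{prop:existforall} (or the definition of $R^C$ together with its converse), $\assign{\vec a}\varphi\in X$. Then since $X$ is an MCS it contains exactly one of $\varphi,\neg\varphi$ at the modal level: if $\varphi\notin Y'$, maximality of $Y'$ gives $\neg\varphi\in Y'$, hence $\assign{\vec a}\neg\varphi\in X$ by the $R^C$-condition applied to $Y'$, which by axiom $\mathsf{N}$ yields $\neg\assign{\vec a}\varphi\in X$, contradicting $\assign{\vec a}\varphi\in X$ and the consistency of $X$. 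By symmetry $Y$ and $Y'$ contain the same sentences, so $Y=Y'$.

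I expect \textbf{seriality} to be the main obstacle, not because the argument is long but because it is where all the preceding lemmas must be marshalled correctly: one must be careful that $Y$ inherits the $\forall$-property (supplied by Lemma \ref{lemma:lindy} rather than by the plain Lindenbaum Lemma \ref{lemma:mcs}), that the signature indeed has infinitely many constants so that Lemma \ref{lemma:lindy} applies (this is why the construction passes to $\alpha^\star$), and that the membership relation defining $R^C$ is verified in the correct direction via Proposition \ref{prop:existforall}. Functionality, by contrast, is a short self-contained consequence of axiom $\mathsf{N}$ and maximality. With both properties established, $\tuple{\mathcal{D}^C,S^C,R^C}$ is a serial and functional Kripke frame and $\G^C$ is a CGS.
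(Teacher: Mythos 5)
Your proposal is correct and follows essentially the same route as the paper's own proof: seriality is obtained by extending $Z=\set{\psi\mid\assign{\vec a}\psi\in X}$ to a $\forall$-MCS via Lemma \ref{lemma:lindy} and then invoking Proposition \ref{prop:existforall} to get $\tuple{X,\vec a,Y}\in R^C$, while functionality follows from axiom $\mathsf{N}$ together with maximality and consistency of $X$. The only cosmetic difference is that the paper explicitly notes $\assign{\vec a}\top\in X$ (via $\mathsf{Nec}$) to guarantee the set being extended is nonempty, a detail your argument subsumes.
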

\iffalse
\begin{proof}
    We have to prove that the relation $R^C$ of the canonical model is serial and functional. 
    
    For seriality, we have that given any state $X \in S^C$, %of the canonical model, 
    $X$ contains the formula $\assign{\vec{a}}\top$ for any $a\in D^\mathcal{C}$ %because $\top $ is a tautology and because of rule $\mathsf{Gen}$.
    due to $\top$ being a tautology and the application of $\mathsf{Gen}$.
    Thus, by Lemma \ref{lemma:lindy}, for any $\vec{a}\in D^C$ there is a $Y\in S^C$ such that $\set{\top}\cup \set{\psi \mid  \assign{\vec a} \psi\in X} \subseteq Y $, and by Proposition \ref{prop:existforall} we have that $\tuple{X,\vec{a},Y}\in R^C$

    For functionality, suppose that $\tuple{X,\vec{a},Y}\in R^C$, $\tuple{X,\vec{a},Z}\in R^C$ and $Z\neq Y$. Thus there is a $\varphi$, such that $\varphi\in Y$ and $\neg \varphi \in Z$. By the definition of $R^C$, this implies $\assign{\vec{a}}\varphi \in X$ and $\assign{\vec a} \neg \varphi \in X$.
    By $\mathsf{N}$, the latter is equivalent to $\neg \assign{\vec a}\varphi \in X$, which contradicts the consistency of $X$.
    %and by axiom $\mathsf{N}$ and $\mathsf{MP}$ that $\neg \assign{\vec a}\varphi \in X$ against the consistency of $X$. 
\end{proof}
\fi

\begin{lemma}[Truth Lemma] 
\label{lemma:truth}
For any state $X \in S^C$ %of the canonical model $\G^C$  
and for any sentence $\varphi$, we have that $\G^C, X \models \varphi$ iff $\varphi \in X$.  
\end{lemma}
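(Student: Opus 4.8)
The plan is to prove the Truth Lemma by induction on the structure of the sentence $\varphi$. The interesting feature here is that because we are dealing with sentences (closed formulae), the cases for the propositional connectives and the atoms are entirely standard, while the modal case $\assign{\vec a}\psi$ and the quantifier case $\forall x \psi$ will carry the real content. I would set up the induction with a strengthened statement if needed, but since $\varphi$ is closed and substitution of constants for bound variables keeps us among sentences (and strictly decreases complexity in the relevant measure), a direct structural induction should suffice.

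First I would dispatch the base and boolean cases. For $\varphi = p \in \Ap$, the equivalence $\G^C, X \models p \iff p \in X$ is immediate from the definition of $\mathcal{V}^C$. The case $\varphi = \neg\psi$ uses the induction hypothesis together with the fact that $X$ is an MCS (so exactly one of $\psi, \neg\psi$ belongs to $X$), and $\varphi = \psi \land \chi$ uses maximal consistency to show $\psi \land \chi \in X$ iff both conjuncts are in $X$. These rely on $\mathsf{PC}$ and the MCS properties only.

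The modal case $\varphi = \assign{\vec a}\psi$ is where the two preceding results are essential. For the right-to-left direction, suppose $\assign{\vec a}\psi \in X$. Using the accessibility relation $R^C$, I would invoke Lemma~\ref{lemma:lindy} to obtain a successor $\forall$-MCS $Y$ with $\{\psi\} \cup \set{\theta \mid \assign{\vec a}\theta \in X} \subseteq Y$, so that $\tuple{X,\vec a, Y} \in R^C$ by Proposition~\ref{prop:existforall}; then $\psi \in Y$ gives $\G^C,Y \models \psi$ by the induction hypothesis, and functionality of $R^C$ (established in the preceding proposition) ensures $Y$ is the \emph{unique} successor, whence $\G^C, X \models \assign{\vec a}\psi$ via the semantics of Proposition~\ref{prop:altSem}. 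For the left-to-right direction, if $\G^C, X \models \assign{\vec a}\psi$, then by seriality there is a (unique) $Y$ with $\tuple{X,\vec a,Y}\in R^C$ and $\G^C,Y\models\psi$; the induction hypothesis yields $\psi \in Y$, and then Proposition~\ref{prop:existforall} (the direction $\assign{\vec a}\varphi \in X$ implies $\varphi \in Y$, contrapositively together with $\mathsf{N}$) lets me conclude $\assign{\vec a}\psi \in X$.

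The quantifier case $\varphi = \forall x \psi$ is the one I expect to be the main obstacle, since it is precisely where the $\forall$-property is designed to be used. For the left-to-right direction, if $\G^C, X \models \forall x\psi$ then $\G^C, X \models \psi[a/x]$ for every constant $a$; each $\psi[a/x]$ is a sentence of smaller complexity, so the induction hypothesis gives $\psi[a/x] \in X$ for all $a$, and by the $\forall$-property there is some witness $a$ with $(\psi[a/x] \to \forall x \psi) \in X$, whence $\mathsf{MP}$ and maximal consistency yield $\forall x \psi \in X$. For the right-to-left direction, if $\forall x\psi \in X$ then axiom $\mathsf{E}$ ($\forall x \psi \to \psi[a/x]$) together with the MCS property gives $\psi[a/x] \in X$ for every constant $a$; the induction hypothesis turns this into $\G^C, X \models \psi[a/x]$ for all $a$, which by the semantics of the universal quantifier (quantifying over all actions $a \in \Ac^C = \mathcal{C}\cup\mathcal{C}^\star$) is exactly $\G^C, X \models \forall x\psi$. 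The care needed here is to confirm that the domain of quantification in the canonical model, namely $\Ac^C$, coincides with the set of constants over which the $\forall$-property and axiom $\mathsf{E}$ range, so that the correspondence between ``all constants $a$'' syntactically and ``all actions $a$'' semantically is exact; this is guaranteed by the construction of $\G^C$ over $\alpha^\star$.
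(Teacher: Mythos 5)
Your proof is correct and follows essentially the same route as the paper's: structural induction with the modal case handled via Lemma~\ref{lemma:lindy} and Proposition~\ref{prop:existforall}, and the quantifier case via axiom $\mathsf{E}$ and the $\forall$-property. The only cosmetic differences are that the paper derives $\assign{\vec a}\psi \in X$ in the left-to-right modal direction directly from the definition of $R^C$ rather than via the contrapositive of Proposition~\ref{prop:existforall} with $\mathsf{N}$, and it phrases the left-to-right quantifier step as a contradiction; your appeal to functionality in the modal case is unnecessary under the existential truth clause but harmless.
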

\begin{proof}
    %The proof is by induction on $\varphi$. The \textit{base case}, in which $\varphi \in \Ap$, %is an atomic proposition, 
   %immediately follows from the definition of $\mathcal{V}^C$. %The cases in which $\varphi$ is a Boolean formula
    %Boolean cases follow from the induction hypothesis (IH) and the properties of maximally consistent sets.

    The proof is by induction on $\varphi$. The \textit{base case} $\varphi = p$ %is an atomic proposition, 
   follows from the definition of $\mathcal{V}^C$. %The cases in which $\varphi$ is a Boolean formula
    Boolean cases follow from the induction hypothesis (IH) and the properties of MCSs.

 %\begin{itemize}
     %\item if $\varphi$ is $\assign{\vec{a}} \psi$
\textit{Case} $\varphi = \assign{\vec{a}} \psi$. %Suppose that 
Let $\G^C,X\models \varphi$. By the definition of semantics, this means that there is a $Y$ such that $\tuple{X,\vec{a},Y}\in R^C$ and $\G^C,Y\models \psi$. The latter is equivalent to $\psi\in Y$ by the IH, and by the definition of $R^C$ we conclude that $\varphi\in X$.

Let $\varphi \in X$. By Lemma \ref{lemma:lindy}, there is a maximal consistent set of sentences $Y$ over $\alpha^\star$ that has the $\forall$-property and such that $\set{\psi}\cup\set{\theta\mid \assign{\vec{a}}\theta\in X}\subseteq Y$. By Proposition \ref{prop:existforall} this means $\tuple{X,\vec{a},Y}\in R^C$, which, in conjunction with the fact that $\psi \in Y$, is equivalent to $\G^C,X \models \varphi$ by the IH.
    %\begin{itemize}
        %\item[] 
       % Suppose that $\G^C,X\models \varphi$, this means that there is $Y$ such that $\tuple{X,\vec{a},Y}\in R^C$ and $\G^C,Y\models \psi$. By induction hypothesis $\psi\in Y$ and by the definition of $R^C$ we conclude that $\varphi\in X$. 
       % \item[] Suppose that $\varphi \in X$ by Lemma \ref{lemma:lindy} there is a maximal consistent set of sentences $Y$ over $\alpha^\star$ that has the $\forall$-property and such that $\set{\psi}\cup\set{\theta\mid \assign{\vec{a}}\theta\in X}\subseteq Y$. By the definition of $R^C$ this means $\tuple{X,\vec{a},Y}\in R^C$ and thus $\G^C,X \models \varphi$. 
    %\end{itemize} 

    %\item if $\varphi$ is $\forall x \psi $
\textit{Case} $\varphi = \forall x \psi$. If $\G^C,X\models \varphi$, then, by the IH, it holds that (i) $\psi[a/x]\in X$ for every $a\in \Ac^C$ . Now, assume towards a contradiction that $\varphi\not\in X$.  Since $X$ is maximally consistent, we have that $\neg\forall x \psi\in X$. Moreover, since $X$ has the $\forall$-property,  there is a constant $a$ such that $\psi[a/x]\to \forall x \psi \in X$. Then by (i) it follows that %$X\vdash \forall x \psi$, and thus, because of maximal consistency, 
$\forall x \psi\in X$, which contradicts $\forall x \psi \not \in X$.

Suppose that $\varphi\in X$, which implies, by axiom $\mathsf{E}$ and $\mathsf{MP}$, that $\psi[a/x] \in X$ for every $a\in \Ac^C$. By the IH, we conclude that $\G^C,X\models \varphi [a/x]$ for every $a\in \Ac^C$, which is equivalent to $\G^C,X\models \forall x \varphi$ by the definition of semantics. 
   % \begin{itemize}
    %    \item[] If $\G^C,X\models \varphi$ then by induction hypothesis (i) $\psi[a/x]\in X$ for every $a\in \Ac^C$ . Suppose that $\varphi\not\in X$ thus, since $X$ is maximal consistent, we conclude that $\neg\forall x \psi\in X$. Since $X$ has the $\forall$-property  there is a constant $a$ such that $\psi[a/x]\to \forall x \psi \in X$ and because of $(i)$ and $MP$ we conclude that $X\vdash \forall x \psi$ and thus, because of maximal consistency, $\forall x \psi\in X$ which contradicts $\neg \forall x \psi \in X $
    %    \item[]  Suppose that $\varphi\in X$ thus by axiom $\mathsf{E}$ and $\mathsf{MP}$ we obtain that $X\vdash \psi[a/x]$ for every $a\in \Ac^C$. By induction hypothesis, we conclude that $\G^C,X\models \varphi [a/x]$ for every $a\in \Ac^C$ and thus that $\G^C,X\models \forall x \varphi$. 
   % \end{itemize}
%\end{itemize}
\end{proof}

We finally prove the %soundness and 
completeness of $\CSL$.
\begin{theorem}
    For every set of formulae $X$ and every formula $\varphi$, we have that $X\vdash \varphi$ iff $X\models \varphi$.
\end{theorem}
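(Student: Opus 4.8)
The plan is to prove the two directions separately: soundness ($X\vdash\varphi$ implies $X\models\varphi$), which is routine, and completeness ($X\models\varphi$ implies $X\vdash\varphi$), which is the substantial half. For soundness, suppose $X\vdash\varphi$, so there is a finite $Y\subseteq X$ with $\vdash\bigwedge Y\to\varphi$. By Lemma \ref{lemma:sound}, every derivable formula is valid, hence $\models\bigwedge Y\to\varphi$. To lift this to $X\models\varphi$ I would fix a CGS $\G$ and state $s$ with $\G,s\models X$ and unfold satisfaction through closures, as dictated by Definitions \ref{def:clos} and \ref{def:satopen} and Remark \ref{remark:open}: writing $V$ for the variables occurring free in $Y\cup\set{\varphi}$, the hypotheses $\G,s\models C(\psi)$ for $\psi\in Y$, together with $\G,s\models C(\bigwedge Y\to\varphi)$, yield for every assignment of constants to $V$ first each instance of the conjuncts of $\bigwedge Y$ and then the matching instance of $\varphi$; re-quantifying over $V$ gives $\G,s\models C(\varphi)$, i.e. $\G,s\models\varphi$. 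This uniform-instantiation bookkeeping over closures is the only delicate point of the direction; the rest is propositional.

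For completeness I would argue by contraposition, reducing the statement to the model-existence property furnished by the canonical construction. Assume $X\not\vdash\varphi$. By the definition of $X\vdash\varphi$ together with propositional reasoning, no finite subset of $X\cup\set{\neg\varphi}$ is inconsistent (an inconsistent finite $\set{Y,\neg\varphi}$ would give $\vdash\bigwedge Y\to\varphi$, contradicting $X\not\vdash\varphi$), and since $\vdash$ only ever inspects finite subsets, $X\cup\set{\neg\varphi}$ is itself consistent. Working at the level of sentences, as licensed by Remark \ref{remark:open}, it then suffices to realise this consistent set of sentences as a state of the canonical model.

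To do so I would first apply Lemma \ref{lemma:expConst} to extend $X\cup\set{\neg\varphi}$, over the enriched signature $\alpha^\star$, to a consistent set having the $\forall$-property, and then apply the Lindenbaum Lemma (Lemma \ref{lemma:mcs}) to obtain a $\forall$-MCS $W$ still containing $X\cup\set{\neg\varphi}$ (checking, as usual, that the maximalisation step preserves the $\forall$-property). By Definition \ref{def:can_model}, $W$ is a state of $\G^C$, and $\G^C$ is a genuine CGS, its relation $R^C$ being serial and functional as shown just before the Truth Lemma. Invoking the Truth Lemma (Lemma \ref{lemma:truth}) at $W$ gives $\G^C,W\models\psi$ for every $\psi\in X$ (since $X\subseteq W$) and $\G^C,W\models\neg\varphi$, hence $\G^C,W\not\models\varphi$. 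As $\G^C$ is a CGS over a signature with $n$ agents, this exhibits a model of $X$ refuting $\varphi$, so $X\not\models\varphi$, completing the contrapositive.

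The step I expect to be the main obstacle is the interface between the assumption-based consequence relation and the closure-based semantics: because the truth of an \emph{open} formula is defined through its closure, the reduction of $X\models\varphi$ to the satisfiability of a single set of sentences must be carried out uniformly and carefully on both sides, and this same care underlies both the closure bookkeeping in the soundness direction and the passage to sentences in completeness. A secondary, more bureaucratic point is the signature mismatch: the canonical model is built over $\alpha^\star$ rather than $\alpha$, so one must rely on the fact that validity, and hence logical consequence, quantifies over all CGSs with the correct number $n$ of agents, which makes a countermodel over the enlarged signature $\alpha^\star$ admissible.
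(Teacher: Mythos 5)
Your proposal is correct and follows essentially the same route as the paper: the completeness direction is proved by contraposition, extending the consistent set $X\cup\set{\neg\varphi}$ to a $\forall$-MCS via Lemmas \ref{lemma:expConst} and \ref{lemma:mcs} and then invoking the Truth Lemma on the canonical model, while soundness reduces to Lemma \ref{lemma:sound}. The extra care you take over the closure bookkeeping, the preservation of the $\forall$-property under maximalisation, and the $\alpha$ versus $\alpha^\star$ signature issue is sound and only makes explicit points the paper leaves implicit.
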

\begin{proof}
    %The left-to-right direction is proved by induction on the length of the derivation of $X\vdash \varphi$ using Lemma \ref{lemma:sound}. For the other direction, 
    %Assume that 
    Let $X\not\vdash \varphi$. This means that $X\cup\set{\neg\varphi}$ is consistent, and, by Lemmas \ref{lemma:expConst} and \ref{lemma:mcs}, there is a %maximally consistent set 
    $\forall$-MCS $Z$, %with the $\forall$-property, 
    such that $X\cup\set{\neg\varphi} \subseteq Z$. As $\neg\varphi \in Z$, it holds that $\varphi\notin Z$, and by the truth lemma we have that $\G^C,Z\models X$ and $\G^C,Z\not\models \varphi$.  
\end{proof}

%    \textit{Case} $\varphi=\assign{\vec a} \psi$.  Assume that $\assign{\vec a} \psi \in X$, and that there is some $Y$ such that $\tuple{X,a_1,\ldots , a_n, Y}\in R^C$. By the construction of the canonical model, the latter is equivalent to the fact that $\psi \in Y$, which, in turn, is equivalent to $\G^C, Y \models \psi$ by the induction hypothesis. Finally, $\tuple{X,a_1,\ldots , a_n, Y}\in R^C$ and $\G^C, Y \models \psi$ is equivalent, by the definition of the semantics, to $\G^C, X \models \assign{\vec a} \psi$. 
%$$    \textit{Case} $\varphi=\forall x \psi$. \textit{From left to right.} Assume that $\forall x \psi \in X$, and let $\sigma^\prime$ be an arbitrary assignment  WAIt. Let me do it, i have an hour right now
 %   alright! noice 
    %I'll go and eat something then!
    
    %When $\varphi=\assign{\vec a} \psi$

\section{Complexity Profile of $\CSL$}

\label{sec:mc}
%Now we turn to the complexity of the model checking problem for $\CSL$, and show that despite $\CSL$ being quite expressive, its model checking can be done in polynomial time.
Now we turn to the complexity profile of $\CSL$, and show that %the high expressivity of $\CSL$ comes at a price: its model checking procedure 
the complexity of the model checking problem \textit{PSPACE}-complete and that the satisfiability problem is undecidable.

\paragraph*{Model Checking}
    Let $\G = \tuple{n,\Ac, \mathcal{D}, S,R, \mathcal{V} }$ be a finite CGS, $s \in S$, and closed formula $\varphi \in \CSL$ constructed over a signature of $\G$. The \emph{local model checking problem} for $\CSL$ consists in computing whether $\G, s \models \varphi$.

\begin{theorem}
\label{thm:model_checking}
    The model checking problem for $\CSL$ is PSPACE-complete.
\end{theorem}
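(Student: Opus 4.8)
The plan is to prove the two matching bounds separately: membership in \textit{PSPACE} and \textit{PSPACE}-hardness. For membership, I would describe a recursive decision procedure $\mathrm{MC}(\G,s,\psi,\eta)$ that evaluates a subformula $\psi$ of the input $\varphi$ at a state $s$ under an \emph{environment} $\eta$ mapping the variables currently in scope to actions of $\Ac$; since $\G$ is finite, $\Ac$ is finite, so $\eta$ ranges over a finite set. The atomic case queries $\mathcal{V}$, the Boolean cases recurse and combine the results (negation simply flips the returned bit); for $\forall x\,\chi$ the procedure iterates over all $a\in\Ac$, recursing on $\chi$ under $\eta[x\mapsto a]$ and returning the conjunction of the outcomes; for $\assign{t_1,\ldots,t_n}\chi$ it resolves each $t_i$ to an action $a_i$ (through $\eta$ if $t_i$ is a variable, directly if it is a constant), uses seriality and functionality (Proposition~\ref{prop:altSem} and the remark following it) to locate the \emph{unique} successor $t$ with $\tuple{s,a_1,\ldots,a_n,t}\in R$, and recurses on $\chi$ at $t$. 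Correctness is an induction on $\psi$ against the satisfaction clauses, where the point is that $\eta$ records exactly the substitution performed by the definition of satisfaction, so the procedure is well-defined on the closed input $\varphi$ (all variables reaching a modality are already bound in $\eta$).

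The space analysis then gives membership: the recursion depth is bounded by the nesting depth of $\varphi$, hence by $|\varphi|$, and each stack frame stores only a pointer into $\varphi$, the current state ($O(\log|S|)$ bits), and, for a $\forall$-frame, the index of the action currently being tried ($O(\log|\Ac|)$ bits). The environment is carried along the stack with at most $|\varphi|$ bindings of $O(\log|\Ac|)$ bits each, and resolving a modality by scanning $R$ for the unique successor is done in space polynomial in $|\G|$. Hence the whole computation runs in space $O(|\varphi|\cdot\log|\G|)$, so the problem lies in \textit{PSPACE}.

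For hardness I would reduce from \textsc{Tqbf}, the validity problem for fully quantified Boolean formulas. Given $\Phi = Q_1 x_1\cdots Q_m x_m\,\phi$ with $Q_i\in\{\forall,\exists\}$ and $\phi$ a Boolean combination of the $x_i$, take the fixed constant-size CGS $\G$ with one agent, action set $\Ac=\{0,1\}$, and three states $s_0,s_\top,s_\bot$: from $s_0$ action $1$ leads to $s_\top$ and action $0$ to $s_\bot$, while $s_\top$ and $s_\bot$ loop on both actions (so $\G$ is serial and functional), and a single atom $q$ holds exactly in $s_\top$. I would translate $\phi$ homomorphically over $\neg$ and $\wedge$, sending each occurrence of a variable $x_i$ to $\assign{x_i}q$, and set $\Psi = Q_1 x_1\cdots Q_m x_m\,\mathrm{tr}(\phi)$. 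Since the $\CSL$ quantifiers range over $\Ac=\{0,1\}$ and, by the remark after Proposition~\ref{prop:altSem}, $\assign{x_i}q$ holds at $s_0$ iff the action bound to $x_i$ is $1$, an induction on the quantifier prefix and on the structure of $\phi$ shows $\G,s_0\models\Psi$ iff $\Phi$ is true; the reduction is plainly polynomial, and \textsc{Tqbf} is \textit{PSPACE}-complete.

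The main obstacle I expect is the hardness direction, and specifically arranging for the \emph{matrix} $\phi$ to be verified without an exponential blow-up of the model: a naive encoding that records a full truth assignment in the state space would need $2^m$ states. The gadget above sidesteps this by pushing the entire Boolean structure of $\phi$ into the formula $\mathrm{tr}(\phi)$ and evaluating each variable occurrence with a single modal step into the fixed two-outcome "truth gadget", so the CGS stays of constant size. The delicate part is then to check carefully that the $\CSL$ quantifier semantics---quantification over actions, with variable-to-action resolution \emph{deferred} to the modalities---lines up exactly with Boolean quantification in \textsc{Tqbf}, which is what the induction in the previous paragraph must make precise.
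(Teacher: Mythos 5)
Your proposal is correct and follows essentially the same route as the paper: a recursive evaluation procedure for the \textit{PSPACE} upper bound and a reduction from the validity of quantified Boolean formulas for the lower bound. The only cosmetic difference is that the paper phrases the upper bound as an alternating polynomial-time algorithm on formulas in negation normal form (invoking \textit{APTIME} $=$ \textit{PSPACE}), whereas you give a deterministic polynomial-space recursion with an explicit environment; both analyses are standard and equivalent.
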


\begin{proof}
To show that the model checking problem for $\CSL$ is in \textit{PSPACE}, we provide an alternating recursive Algorithm \ref{cslMC}\footnote{For brevity, we omit Boolean cases and the whole algorithm is available in the Appendix} that takes as an input a finite CGS $\G$, state of the CGS $s$, and a closed formula $\varphi$. The formula $\varphi$ is provided in negation normal form (NNF), i.e. in equivalent rewriting, where all negations are pushed inside and appear only in front of propositional variables. To convert $\varphi$ into the equivalent NNF formula, we can use propositional equivalences, interdefinability of quantifiers, and the validity  $\neg \assign{\Vec{t}}\varphi \leftrightarrow \assign{\vec{t }}\neg \varphi$. The size of a formula in NNF is at most linear in the size of the original formula.
\begin{algorithm}
	\caption{An algorithm for model checking $\CSL$} \label{cslMC} 
	%\small
 %\footnotesize
	\begin{algorithmic}[1] 		
		\Procedure{MC}{$\G, s, \varphi$}		
      %\Case {$\varphi = p$}
      %      \State{\textbf{return} $s \in \mathcal{V}(p)$}
      %  \EndCase
       %   \Case {$\varphi = \lnot p$}
        %    \State{\textbf{return} not $s \in \mathcal{V}(p)$}
        %\EndCase
       %\Case {$\varphi = \psi \lor \chi$}
        %    \State{\textbf{guess} $\theta \in \{\psi, \chi\}$  }
         %   \State{\textbf{return} $\textsc{MC} (\G,s,\theta)$}
        %\EndCase
         %      \Case {$\varphi = \psi \land \chi$}
          %  \State{\textbf{universally choose} $\theta \in \{\psi, \chi\}$  }
           % \State{\textbf{return} $\textsc{MC} (\G,s,\theta)$}
        %\EndCase
       \Case {$\varphi = \assign{a_1,...,a_n} \psi$}
       \State{\textbf{guess} $t \in S$ such that   $\tuple{s, a_1, ..., a_n, t} \in R$}
       \State{\textbf{return} $\textsc{MC} (\G, t, \psi)$}
       %%\If {$\tuple{s, a_1, ..., a_n, t} \in R$ for some $t\in S$}
            %\State{\textbf{return} $\textsc{MC} (\G, t, \psi)$}
        %\Else
       %     \State{\textbf{return} \textit{false}}
      % \EndIf
       %     \State{\textbf{return} $\textsc{MC} (\G,s,\psi)$ or  $\textsc{MC} (\G,s,\chi)$}
        \EndCase
         \Case{$\varphi = \exists x \psi$}
        \State{\textbf{guess} $a \in \Ac$  }
        \State{\textbf{return} $\textsc{MC} (\G,s,\psi[a/x])$}
        %\ForAll {$a \in \Ac$}
            %\If{not $\textsc{MC} (\G,s,\psi[a/x])$}
              %  \State{\textbf{return} \textit{false}}
           % \EndIf
        %\EndFor
       % \State{\textbf{return} \textit{true}}
        \EndCase
        \Case{$\varphi = \forall x \psi$}
        \State{\textbf{universally choose} $a \in \Ac$  }
        \State{\textbf{return} $\textsc{MC} (\G,s,\psi[a/x])$}
        %\ForAll {$a \in \Ac$}
            %\If{not $\textsc{MC} (\G,s,\psi[a/x])$}
              %  \State{\textbf{return} \textit{false}}
           % \EndIf
        %\EndFor
       % \State{\textbf{return} \textit{true}}
        \EndCase
   \EndProcedure
	\end{algorithmic}
\end{algorithm}
The correctness of the algorithm follows from the definition of the semantics. Its termination follows from the fact that every recursive call is run on a subformula of smaller size. Moreover, each call of the algorithm takes at most polynomial time, and hence it is in \textit{APTIME}. From the fact that \textit{APTIME} = \textit{PSPACE} \cite{alternation}, we conclude that the model checking problem for $\CSL$ is in \textit{PSPACE}.

    The hardness can be shown by the reduction from the satisfiability of quantified Boolean formulas (see Appendix). % and the details are presented in the Appendix. 
\end{proof}

\begin{remark}
    The model checking problem for a related \textit{$\mathsf{SL}$ with simple goals} ($\mathsf{SL[SG]}$) is $P$-complete \cite{belardinelli19}. 
This is due to the fact that in $\mathsf{SL[SG]}$ the quantification prefix and the operators for assigning strategies to agents always go together. Hence, for example, the $\CSL$ formula over two agents $\theta:=\forall x \exists y \forall z (\assign{z,x} \varphi \land \assign{z,y} \psi \land \assign{x,z}\chi)$ cannot be expressed in $\mathsf{SL[SG]}$. The higher complexity of $\CSL$ stems from the fact that quantifiers and strategy assignments are less rigid than in $\mathsf{SL[SG]}$, and thus $\CSL$ is closer to the full $\mathsf{SL}$ in this regard. 
\end{remark}

\paragraph*{Satisfiability}
    Let $\varphi \in \CSL$ be a closed formula. The \emph{satisfiability problem} for $\CSL$ consists in determining whether there is a CGS $\G,s$ such that $\G,s \models \varphi$.

%Arbitrary quantification prefixes in $\CSL$ come at a price: the satisfiability problem of the logic is undecidable.

\begin{theorem}
    The satisfiability problem for $\CSL$ is undecidable.
\end{theorem}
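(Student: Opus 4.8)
The plan is to reduce the satisfiability problem of classical first-order logic over a single binary relation symbol---which is undecidable by the Church--Turing theorem---to the satisfiability problem for $\CSL$. The key observation is that, since $\CSL$ has only next-time modalities, it cannot enforce any constraint at unboundedly many successors (there is no reflexive-transitive-closure or ``globally'' operator), so the undecidability cannot come from modal depth as in tiling arguments; it must instead come from the first-order quantification over actions. That quantification is exactly strong enough to mimic first-order quantification over a domain, provided we let the action set $\Ac$ play the role of the first-order domain. This also fits the completeness theorem: since $\CSL$ has a finitary axiomatisation, its validity problem is recursively enumerable, hence its satisfiability problem is co-r.e.; the reduction will show that it is in fact $\Pi^0_1$-complete.

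Concretely, I would fix a signature with $n=2$ agents and a distinguished atom $q\in\Ap$, and, given a first-order sentence $\Phi$ built from atoms $P(t_1,t_2)$, the Boolean connectives, and $\forall$, define a translation $\tau$ into $\CSL$ by $\tau(P(t_1,t_2))=\assign{t_1,t_2}q$, commuting with the Boolean connectives, and $\tau(\forall x\,\Psi)=\forall x\,\tau(\Psi)$. Since $\Phi$ is closed, $\tau(\Phi)$ is a $\CSL$ sentence. The intended correspondence identifies the domain of a first-order structure with $\Ac=\mathcal{C}$ and reads the binary relation off the single-step successors of a fixed evaluation state $s_0$: the pair $(a,b)$ is in the relation precisely when the unique $(a,b)$-successor of $s_0$ satisfies $q$, i.e.\ when $\G,s_0\models\assign{a,b}q$. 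I would then prove, by induction on $\Phi$, that $\Phi$ is first-order satisfiable if and only if $\tau(\Phi)$ is $\CSL$-satisfiable.

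For the direction from first-order to $\CSL$, I would start from a countable model $\mathcal{M}=(D,P^{\mathcal{M}})$ of $\Phi$ (which exists by the downward L\"owenheim--Skolem theorem), set $\Ac=\mathcal{C}=D$, and build a ``star-shaped'' CGS: a root $s_0$ whose unique $(a,b)$-successor $s_{ab}$ satisfies $q$ iff $(a,b)\in P^{\mathcal{M}}$, extended arbitrarily below the first level (e.g.\ by self-loops) so that the transition relation becomes serial and functional. Because every modality in $\tau(\Phi)$ is evaluated at $s_0$ and looks exactly one step, only these first-level successors are relevant, and the induction yields $\G,s_0\models\tau(\Phi)$ iff $\mathcal{M}\models\Phi$; the crucial quantifier step uses that $\forall x$ in $\CSL$ ranges over $\Ac=D$, matching the first-order quantifier, and that $\CSL$ evaluates quantifiers by substituting constants, which aligns with evaluating first-order formulae under assignments. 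For the converse, given any CGS $\G$ and state $s_0$ with $\G,s_0\models\tau(\Phi)$, I would read off a first-order model on $D=\Ac$ by setting $P^{\mathcal{M}}=\{(a,b)\mid \G,s_0\models\assign{a,b}q\}$ (an arbitrary binary relation, since the $(a,b)$-successors are distinct states and $q$ is unconstrained), and the same induction gives $\mathcal{M}\models\Phi$.

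The main obstacle is getting the quantifier ranges to line up cleanly: one must ensure the first-order domain can always be taken to be a legitimate action set (non-empty, countable, and disjoint from $\Ap$), which is handled by L\"owenheim--Skolem together with the freedom in choosing $\mathcal{C}$, and one must check that $\CSL$'s closure-based treatment of open formulae causes no trouble, which holds because $\tau$ maps sentences to sentences and the induction only ever manipulates closed instances $\tau(\Psi)[\vec a/\vec x]$. The remaining points are routine: confirming that first-order satisfiability over a single binary relation is genuinely undecidable (a standard reduction class), and verifying that the star-shaped frame can always be completed to a serial, functional one. Padding the modality with a fixed constant in the unused positions extends the argument to any fixed number of agents $n\geq 2$, while a nested encoding $\assign{t_1}\assign{t_2}q$ handles the case $n=1$.
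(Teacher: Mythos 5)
Your reduction is correct, but it is a genuinely different argument from the one in the paper. The paper obtains undecidability by adapting the tiling-problem construction used for $\mathsf{SL}$ by Mogavero et al., observing that only next-time formulae are needed there; you instead reduce directly from satisfiability of first-order logic with a single binary predicate (a classical reduction class, undecidable without equality or function symbols), translating $P(t_1,t_2)$ to $\assign{t_1,t_2}q$ and letting $\Ac$ play the role of the first-order domain. Your argument goes through: the translation maps sentences to sentences and commutes with substitution of constants, an arbitrary countable binary relation on $\Ac\times\Ac$ is realisable at depth one in a serial, functional CGS using just two sink states (one satisfying $q$, one not, each with self-loops for all decisions), and the cardinality alignment between the first-order domain and $\mathcal{C}=\Ac$ is unproblematic for equality-free sentences. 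What your route buys is a sharper and more explanatory result: it exhibits the next-time fragment as containing dyadic first-order logic outright, locates the source of undecidability in the quantification over actions rather than in any temporal depth, and, combined with the recursive enumerability of validity that follows from the paper's finitary axiomatisation, pins satisfiability at $\Pi^0_1$-complete. What the paper's route buys is continuity with the $\mathsf{SL}$ literature: reusing the tiling construction is precisely what lets the authors isolate the gap in the published $\Sigma^1_1$-hardness argument (the recurring-tiling step) and thereby reopen the recursive-axiomatisability question for $\mathsf{SL}$. The only points you should make explicit in a full write-up are the citation for undecidability of the one-binary-predicate class and the two-sink-state completion to a serial functional frame; neither is a gap.
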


The undecidability can be shown by employing the construction for $\mathsf{SL}$ from \cite{mogavero10,mogavero16} using the reduction from the classic tiling problem \cite{wangTile}. In their construction, the authors use only formulae of  the next-time fragment of $\mathsf{SL}$, and the proof can be adapted for $\CSL$ (see the Appendix for details).

\begin{remark}
    A knowledgeable reader may point out that the proof in \cite{mogavero10,mogavero16} employed the reduction from a more complex \emph{recurring tiling problem} \cite{harel83}. The problem is known to be $\Sigma^1_1$-complete, and this in particular implies that the next-time fragment of $\mathsf{SL}$, and hence $\CSL$, is not recursively axiomatisable. This is at odds with the axiomatisation of $\CSL$ presented in this paper. However, after a closer inspection, it turned out that the  $\Sigma^1_1$-hardness proof provided in \cite{mogavero10,mogavero16} is incomplete (though the standard, non-recurring, tiling construction stands, and hence the (standard) undecidability). As of now, no fix to this problem has been presented, and hence the existence of a recursive axiomatisation of $\mathsf{SL}$ is now an open question. Note, again, that the (standard) undecidability still holds\footnote{The gap in the proof of non-axiomatisability of $\mathsf{SL}$ was acknowledged and corroborated by the authors of \cite{mogavero10,mogavero16} in personal communication. The general sentiment is that $\mathsf{SL}$ is still not recursively axiomatisable, but to show this result, one will have to employ richer features of $\mathsf{SL}$, beyond its next-time fragment.}. 
\end{remark}

\section{Discussion}
\label{sec:conclusion}

We introduced \textit{first-order coalition logic} ($\CSL$), which combines features of both $\mathsf{CL}$ and $\mathsf{SL}$, and, additionally, allows for explicit action labels in the syntax. With $\CSL$ we have solved several exciting problems. First, we showed that it is strictly more expressive than other known $\mathsf{CL}$'s, and that its model checking problem is \textit{PSPACE}-complete. We then also argued that the satisfiability problem for $\CSL$ is undecidable, pointing out an incomplete result in the foundational $\mathsf{SL}$ paper \cite{mogavero10} and thus reopening the question of whether $\mathsf{SL}$ is recursively axiomatisable.
%For $\CSL$, we showed that it is strictly more expressive than other known coalition logics, and that its model checking problem is \textit{PSPACE}-complete. 
Moreover, we provided a sound and complete axiomatisation of $\CSL$. This is significant, since, to the best of our knowledge, it is \textit{the first axiomatisation of any strategy logic}. % in the literature.

%There is a plethora of open research questions that one can tackle building on our work. Perhaps the most immediate one is exploring the complexity of the satisfiability problem of $\CSL$. In the future, we would also like to properly characterise the expressivity of $\CSL$ by providing an appropriate notion of bisimulation (akin to those in \cite[Chapter 3]{mogavero10thesis} and \cite{belardinelli18}), results on bisimulaiton invariance, and translations into $\mathsf{FOL}$. 

%Another avenue of exciting further research is finding an axiomatisation of an extension of $\CSL$ with $\mathsf{LTL}$ modalities. In such a way, we would be able to advance towards axiomatisations of such rich fragments of $\mathsf{SL}$ as \textit{one-goal} $\mathsf{SL}$ \cite{mogavero16} and \textit{flat conjunctive-goal} $\mathsf{SL}$ \cite{acar19}. 

There is a plethora of open research questions that one can tackle building on our work. Perhaps the most immediate one is finding an axiomatisation of an extension of $\CSL$ with $\mathsf{LTL}$ modalities. In such a way, we would be able to advance towards axiomatisations of such rich fragments of $\mathsf{SL}$ as \textit{one-goal} $\mathsf{SL}$ \cite{mogavero16} and \textit{flat conjunctive-goal} $\mathsf{SL}$ \cite{acar19}. It is also quite interesting to consider $\CSL$ in the context of imperfect information (see \cite{agotnes15} for an overview). 

While dealing with the undedicability of $\CSL$, we mentioned the next-time fragment of $\mathsf{SL}$. To the best of our knowledge, such a fragment has never been singled out and studied before. Hence, it is tempting to look into the variations of this fragment\footnote{There is a nuance in how we can define the next-time fragment of $\mathsf{SL}$. Two obvious candidates are the fragment, where every ne$\mathsf{X}$t modality is immediately preceded by an assignment (similar to $\mathsf{ATL}$), and the fragment, where we do not have such a condition (similar to $\mathsf{ATL}^\star$).}, identify the axiomatisable ones, and have a proper comparison of the latter with $\CSL$. 

As $\mathsf{STIT}$ logics \cite{horty01} admit CGS semantics \cite{boudou18,broersen15}, another avenue of exciting further research is establishing the exact relation between $\CSL$ and variants of $\mathsf{STIT}$ logics like \textit{group} $\mathsf{STIT}$ \cite{herzig08,lorini11}.

\section*{Acknowledgements}
The authors would like to thank Valentin Goranko for the discussion of the preliminary ideas of this paper, Fabio Mogavero for the discussion of the undecidability of $\mathsf{SL}$, and the anonymous reviewers of IJCAI for their superb comments and suggestions.

%% The file named.bst is a bibliography style file for BibTeX 0.99c
\bibliographystyle{named}
\bibliography{cslref}

\clearpage
\appendix
\section*{Technical Appendix}

\subsection*{Relation to Other Formalisms}
\paragraph*{$\mathsf{SFCL} < \CSL$}

We provide an argument for $\mathsf{SFCL}$ that extends the language of propositional logic with constructs $\langle \! \langle  C \rangle \! \rangle (\varphi;\psi_1,...,\psi_k)$ meaning that `coalition $C$ can achieve $\varphi$ while also enabling $\overline{C}$ to achieve any of $\psi_1,...,\psi_k$ (via a suitable joint action)'. 

Formally, the semantics is defined as 
\begin{gather*}
    \G,s \models \langle \! \langle C \rangle \! \rangle (\varphi; \psi_1, ..., \psi_k) \text{ iff }\\
    \exists \sigma_C (\forall \sigma_{\overline{C}} : \G,t \models \varphi \text{ and } \forall \psi_i, \exists  \sigma_{\overline{C}}:\G,u \models \psi) \text{ with } \\ t,u \in S \text{ s.t. } \langle s, \sigma_C \cup \sigma_{\overline{C}}, w \rangle \in R
\text{ and } w \in \{t,u\}.
\end{gather*}
\iffalse
  \begin{alignat*}{3}
        &\G,s \models \langle \! \langle C \rangle \! \rangle (\varphi; \psi_1, ..., \psi_k) && \text{ iff } && \exists \sigma_C (\forall \sigma_{\overline{C}} : \G,t \models \varphi \text{ and }\\ 
       & && && \forall \psi_i, \exists  \sigma_{\overline{C}}:\G_u \models \psi) \text{ with } \\
         & && &&  t,u \in S \text{ s.t. } \langle s, \sigma_C \cup \sigma_{\overline{C}}, w \rangle \in R\\
        & && && \text{ and } w \in \{t,u\}.   
\end{alignat*}   
\fi

Now, let us have another look at CGSs in Figure \ref{fig::exampleCGM}. Recall that these structures are distinguished by the $\CSL$ formula $\exists x \assign{x,x} \lnot p$. We claim that no formula of $\mathsf{SFCL}$ can distinguish $\G_1,s$ from $\G_2,s$. An informal sketch of the induction-based argument is as follows. For purely propositional formulas it is clear that $\G_1,w$ and $\G_2,w$ with $w\in \{s,t\}$ satisfy the same formulas. Now, let us consider socially-friendly coalitional modalities $\langle \! \langle  C \rangle \! \rangle (\varphi;\psi_1,...,\psi_k)$ For the case of grand coalition $C = \{1,2\}$, it is easy to verify that any move in $\G_1,w$ can be matched by a corresponding move $\G_2,w$ to satisfy $\varphi$. Clearly, these transitions will require different actions by agents, but since we do not have access to action labels in $\mathsf{SFCL}$, we are not able to spot the difference. 

For the case of single agents, observe that yet again, every choice of, let's say, agent 1 in one structure can be matched by a choice in the other structure to the same effect. Indeed, whatever agent 1 chooses in $\G_1, s$, $a$ or $b$, she can only satisfy some $\varphi$ that holds in both states $s$ and $t$ (due to the fact that the outcome is determined by what agent 2 chooses as well). Similarly in $\G_2, s$. Now, goals $\psi_i$ of agent 2 can either be satisfied in state $s$, state $t$, or both states. Hence, by the construction of CGSs, in both $\G_1$ and $\G_2$ for each choice of agent 1, agent 2 has an action to either stay in the current state or force the transition to another state. That the outcome of the corresponding transitions satisfy $\psi_i$ follows from the induction hypothesis. 
\newline
\newline
\textbf{Proposition 1.}  $\mathsf{AL}$ is not at least as expressive as $\CSL$.

\begin{proof}
  Consider $\exists x \assign{x, x} \lnot p \in \CSL$, and assume towards a contradiction that there is an equivalent $\varphi \in \mathsf{AL}$. Since we have a countably infinite set of constants $\mathcal{C}$ (and hence actions) at our disposal  and due to the fact that $\varphi$ is finite, we can assume that there are actions $a$ and $b$ that do not appear explicitly in $\varphi$. 

Now, consider two concurrent game structures defined over two agents and two actions in Figure \ref{fig::exampleCGM}. As we have already seen in our argument for Proposition \ref{prop:cslVScl}, $\G_1,s \models \exists x \assign{x, x} \lnot p$ and $\G_2,s \not \models \exists x \assign{x, x} \lnot p$. What is left to show is that $\varphi$ cannot distinguish the two structures, i.e. $\G_1,s \models \varphi$ if and only if $\G_2, s \models \varphi$. 

       The proof is by induction on the complexity of $\varphi$. As the \textit{Base Case}, by the construction of the structures we have that $\G_1,w \models p$ if and only if $\G_2,w \models p$ for $w \in \{s,t\}$ and all $p \in \Ap$. 

    \textit{Induction Hypothesis.} $\G_1,w \models \psi$ if and only if $\G_2, w \models \psi$  for $w \in \{s,t\}$ and for all strict subformulas $\psi$ of $\varphi$.
    
    Boolean cases follow straightforwardly by the induction hypothesis. What is left is the case of modality markers. 

   \textit{Case } $\varphi := [M]\psi$. First, recall that we assume that actions $a$ and $b$ do not appear explicitly in $\varphi$. It is enough to verify four forms of modality markers corresponding to all possible combinations of quantifiers over actions for two agents. Let $\varphi = [\exists x, \exists y] \psi$. It is easy to see that  $\G_1,w \models [\exists x, \exists y] \psi$ if and only if $\G_2,w \models [\exists x, \exists y] \psi$ as in both CGSs the grand coalition of agents $\{1,2\}$ has the full control over which transitions to force. Hence, each move in $\G_1,w$ to a $\psi$-state $v \in \{s,t\}$ can be matched by a move in  $\G_2,w$ to the same $\psi$-state $v$, where we will have $\G_1,v \models \psi$ if and only if $\G_2,v \models \psi$ by the induction hypothesis. 
   The remaining cases for modality markers can be shown similarly.
\end{proof}

\subsection*{Soundness and Completeness}

\textbf{Lemma 1.}
    Each axiom schema of $\CSL$ is valid and each rule of $\CSL$ preserves validity.%: if the premised of a rule are valid so it is its conclusion. 

\begin{proof}
 For the sake of simplicity, we only consider closed instances of the axiom schemata. Validity of other axiom schemata and the soundness of the rules of inference can be shown similarly.
 
 ($\mathsf{N}$). Suppose that $\G,s\models \neg \assign{\vec{{a}}}\varphi$. By the definition of the semantics, this means that $\G,s\not \models \assign{\vec{{a}}} \varphi$, i.e. for each   $t\in S$ if   $\tuple{s,\vec{a},t}\in R$,  then we have that $\G,t\not\models \varphi$. From the seriality and functionality of $R$, we can conclude that there is exactly one such $t$, and thus $\G,s\models \assign{\vec{{a}}}\neg \varphi $. For the converse direction, suppose that $\G,s\models \assign{\vec{{a}}}\neg \varphi$. This means that there is a $t$ such that $\tuple{s,\vec{a},t}\in R$ and $\G,t\not\models \varphi$. By functionality of $R$ there is no other $t$ related to $s$ by means of $\vec{a}$, and thus we can conclude that $\G,s\models \neg \assign{\vec{{a}}}\varphi$. 

 ($\mathsf{B}$). Assume that $\G,s \models \forall x \assign{\vec t} \varphi$, where $x$ is different from every $t_i$. Since the formula is closed, this is just $\G,s \models \forall x \assign{\vec{ {a}}} \varphi$ for some $\vec{a}\in \mathcal D$. By the truth definition, this is equivalent to $\G,s \models \assign{\vec{{a}}} (\varphi [{b}/x])$ for every $b\in \Ac$, which means $\G,s\models \assign{\vec{ a}} \forall x \varphi $.
%\begin{description}
 %   \item[($\mathsf{N}$)] Suppose that $\G,s\models \neg \assign{\vec{{a}}}\varphi$. By the definition of the semantics, this means that $\G,s\not \models \assign{\vec{{a}}} \varphi$, i.e. for each   $t\in S$ if   $\tuple{s,\vec{a},t}\in R$,  then we have that $\G,t\not\models \varphi$. From the seriality and functionality of $R$, we can conclude that there is exactly one such $t$, and thus $\G,s\models \assign{\vec{{a}}}\neg \varphi $. For the converse direction, suppose that $\G,s\models \assign{\vec{{a}}}\neg \varphi$, this means that there is an $t$ such that $\tuple{s,\vec{a},t}\in R$ and $\G,t\not\models \varphi$. By functionality of $R$ there is no other $t$ related to $s$ by means of $\vec{a}$, and thus we can conclude that $\G,s\models \neg \assign{\vec{{a}}}\varphi$. 
  %  \item[($\mathsf{B}$)] Assume that $\G,s \models \forall x \assign{\vec t} \varphi$, where $x$ is different from every $t_i$. Since the formula is closed, this is just $\G,s \models \forall x \assign{\vec{ {a}}} \varphi$ for some $\vec{a}\in \mathcal D$. By the truth definition, this is equivalent to $\G,s \models \assign{\vec{{a}}} (\varphi [{b}/x])$ for every $b\in \Ac$, which means $\G,s\models \assign{\vec{ a}} \forall x \varphi $. \qedhere 
%\end{description}
\end{proof}

\textbf{Lemma 2.}
    If $X$ is a consistent set of sentences over a given signature $\alpha$, then there is a consistent set of sentences $Y$ over $\alpha^\star$ such that $X\subseteq Y$, and $Y$ has the $\forall$-property.

\begin{proof}
    Let $E$ be an enumeration of sentences of the form $\forall x \varphi$ over $\alpha^\star$. We define a sequence of sets of sentences $Y_0,Y_1,\ldots$ with $Y_0=X$ and $Y_{n+1}=Y_n \cup \set{ \varphi[a/x]\to \forall x \varphi}$ where $\forall x \varphi$ is the $n+1$-th sentence in $E$, and $a$ is the first constant in the enumeration occurring neither in $Y_n$ nor in $\varphi$. Since $Y_0$ is over $\alpha$, $Y_n$ is obtained by the addition of $n$ sentences over $\alpha^\star$, and $\alpha^\star$ includes a countably infinite set of new constants, we can always find such an $a$. 
    
    Now we show that $Y_{n+1}$ constructed in the described way is consistent. For this, assume towards a contradiction that $Y_n$ is consistent and $Y_{n+1}$ is not. This means that there is a finite set of sentences $U\subseteq Y_n$ such that $U\cup \set{\varphi [a/x]\to \forall x \varphi}\vdash \bot$. By the propositional reasoning we thus obtain that (i) $U\vdash \varphi [a/x]$ and (ii) $U\vdash \neg \forall x \varphi$. Since $a$ does not appear in $Y_n$, we can use the  $\mathsf{Gen}$ rule of inference and conclude that $U\vdash \forall x \varphi$. In conjunction with (ii) this amounts to the fact that $Y_n$ is not consistent, and hence we arrive at a contradiction.
    %and thus, because of (ii), that $U\vdash \bot$ which contradicts that $Y_n$ is consistent. 
    
    Define $Y$ as $\bigcup_{n\in \mathbb{N} }Y_n$. It is now easy to see that $Y$ is consistent and has the $\forall$-property. 
\end{proof}

\textbf{Proposition 6.}
    For all states $X, Y \in S^C$ %of the canonical model, 
    and for every decision $\vec{a} \in \mathcal{D}^C$, it holds that $\tuple{X,\vec{a},Y}\in R^C$ iff for every sentence $\varphi$, $\assign{\vec{a}}\varphi\in X $ implies $\varphi\in Y$

\begin{proof}
    Left-to-right: suppose $\tuple{X,\vec{a},Y}\in R^C$  and $\varphi\not\in Y$. We need to show that $\assign{\vec{a}}\varphi\notin X$. Since $Y$ is maximally consistent, we have that $\neg\varphi\in Y$. From the fact that $\tuple{X\,\vec{a},Y} \in R^C$ it follows, by Definition \ref{def:can_model}, that $\assign{\vec a}\neg \varphi\in X$. Since $X$ is maximally consistent, we have that $\neg\assign{\vec a} \neg \varphi \not\in  X$, which implies, by axiom $\mathsf{N}$, that $\assign{\vec{a}}\varphi\notin X$. 
    
    Right-to-left: we again reason by contraposition. Suppose that $\tuple{X,\vec{a},Y}\notin R^C$, and thus, by the construction of the canonical model, there is a formula $\varphi \in Y$ such that $\assign{\vec a}\varphi\notin X$. Since $X$ is maximally consistent,  we have that $\neg\assign{\vec a}\varphi\in X$. By the axiom $\mathsf{N}$ we get $\assign{\vec{a}}\neg \varphi \in X$. Thus $\assign{\vec{a}}\neg \varphi \in X$ and $\neg\varphi \not\in Y$ as required for the proof. 
\end{proof}

\textbf{Proposition 7.} The canonical model $\G^C$ is a CGS.

\begin{proof}
    We have to prove that the relation $R^C$ of the canonical model is serial and functional. 
    
    For seriality, we have that given any state $X \in S^C$, %of the canonical model, 
    $X$ contains the formula $\assign{\vec{a}}\top$ for any $a\in D^\mathcal{C}$ %because $\top $ is a tautology and because of rule $\mathsf{Gen}$.
    due to $\top$ being a tautology and the application of $\mathsf{Nec}$.
    Thus, by Lemma \ref{lemma:lindy}, for any $\vec{a}\in D^C$ there is a $Y\in S^C$ such that $\set{\top}\cup \set{\psi \mid  \assign{\vec a} \psi\in X} \subseteq Y $, and by Proposition \ref{prop:existforall} we have that $\tuple{X,\vec{a},Y}\in R^C$

    For functionality, suppose that $\tuple{X,\vec{a},Y}\in R^C$, $\tuple{X,\vec{a},Z}\in R^C$ and $Z\neq Y$. Thus there is a $\varphi$, such that $\varphi\in Y$ and $\neg \varphi \in Z$. By the definition of $R^C$, this implies $\assign{\vec{a}}\varphi \in X$ and $\assign{\vec a} \neg \varphi \in X$.
    By $\mathsf{N}$, the latter is equivalent to $\neg \assign{\vec a}\varphi \in X$, which contradicts the consistency of $X$.
    %and by axiom $\mathsf{N}$ and $\mathsf{MP}$ that $\neg \assign{\vec a}\varphi \in X$ against the consistency of $X$. 
\end{proof}

\subsection*{Model Checking}
Full model checking algorithm for $\CSL$ and the \textit{PSPACE}-hardness proof (see Algorithm \ref{cslMCfull}).
\begin{algorithm}
	\caption{An algorithm for model checking $\CSL$} \label{cslMCfull} 
	\small
 %\footnotesize
	\begin{algorithmic}[1] 		
		\Procedure{MC}{$\G, s, \varphi$}		
      \Case {$\varphi = p$}
            \State{\textbf{return} $s \in \mathcal{V}(p)$}
        \EndCase
          \Case {$\varphi = \lnot p$}
            \State{\textbf{return} not $s \in \mathcal{V}(p)$}
        \EndCase
       \Case {$\varphi = \psi \lor \chi$}
            \State{\textbf{guess} $\theta \in \{\psi, \chi\}$  }
            \State{\textbf{return} $\textsc{MC} (\G,s,\theta)$}
        \EndCase
               \Case {$\varphi = \psi \land \chi$}
            \State{\textbf{universally choose} $\theta \in \{\psi, \chi\}$  }
            \State{\textbf{return} $\textsc{MC} (\G,s,\theta)$}
        \EndCase
       \Case {$\varphi = \assign{a_1,...,a_n} \psi$}
       \State{\textbf{guess} $t \in S$ such that   $\tuple{s, a_1, ..., a_n, t} \in R$}
       \State{\textbf{return} $\textsc{MC} (\G, t, \psi)$}
       %%\If {$\tuple{s, a_1, ..., a_n, t} \in R$ for some $t\in S$}
            %\State{\textbf{return} $\textsc{MC} (\G, t, \psi)$}
        %\Else
       %     \State{\textbf{return} \textit{false}}
      % \EndIf
       %     \State{\textbf{return} $\textsc{MC} (\G,s,\psi)$ or  $\textsc{MC} (\G,s,\chi)$}
        \EndCase
         \Case{$\varphi = \exists x \psi$}
        \State{\textbf{guess} $a \in \Ac$  }
        \State{\textbf{return} $\textsc{MC} (\G,s,\psi[a/x])$}
        %\ForAll {$a \in \Ac$}
            %\If{not $\textsc{MC} (\G,s,\psi[a/x])$}
              %  \State{\textbf{return} \textit{false}}
           % \EndIf
        %\EndFor
       % \State{\textbf{return} \textit{true}}
        \EndCase
        \Case{$\varphi = \forall x \psi$}
        \State{\textbf{universally choose} $a \in \Ac$  }
        \State{\textbf{return} $\textsc{MC} (\G,s,\psi[a/x])$}
        %\ForAll {$a \in \Ac$}
            %\If{not $\textsc{MC} (\G,s,\psi[a/x])$}
              %  \State{\textbf{return} \textit{false}}
           % \EndIf
        %\EndFor
       % \State{\textbf{return} \textit{true}}
        \EndCase
   \EndProcedure
	\end{algorithmic}
\end{algorithm}

PSPACE-hardness is shown by the reduction from the classic satisfiability of quantified Boolean formulas (QBF), which is known to be PSPACE-complete. For a given QBF $\Psi:=Q_1 p_1 ... Q_n p_n \psi (p_1, ..., p_n)$ with $Q_i \in \{\forall, \exists\}$, the problem consists in determining whether $\Psi$ is true. Without loss of generality, we assume that in $\Psi$ each variable is quantified only once.

    Given a QBF $\Psi:=Q_1 p_1 ... Q_n p_n \psi (p_1, ..., p_n)$, we construct a CGS over one agent $\G = \langle 1, \Ac, \mathcal{D}, S, R, \mathcal{V} \rangle$, where $\Ac = \{a_1, ..., a_n\}$, $\mathcal{D} = \Ac$, $S = \{s, s_1, ..., s_n\}$, $R = \{\langle s, a_i, s_i\rangle \mid i \in \{1,...,n\}\} \cup \{\langle s_i, a_j, s_i \rangle \mid i,j \in \{1,...,n\}\}$, and $\mathcal{V}(p_i) = \{s_i\}$. Intuitively, CGS $\G$ has a starting state $s$ and a state $s_i$ for each $p_i$. The agent can reach $s_i$ from $s$ by executing action $a_i$, and all the transitions from $s_i$'s are self-loops.

    The translation from the QBF $\Psi$ into a formula $\varphi$ of $\CSL$ is done recursively as follows: 
    \begin{align*}
    \varphi_0 &:= \psi(\assign{x_1} p_1, ..., \assign{x_n}p_n) \\
    \varphi_k &:=
    \begin{cases}
        \forall x_k \varphi_{k-1} &\text{if } Q_k = \forall\\ 
        \exists x_k \varphi_{k-1} &\text{if } Q_k = \exists\\
    \end{cases}\\
\varphi &:= \varphi_n
\end{align*}
To see that 
$$Q_1 p_1 ... Q_n p_n \psi (p_1, ..., p_n) \text{ is satisfiable iff } \G,s \models \varphi$$
it is enough to notice that setting the truth-value of propositional variable $p_i$ to 1 is modelled by reachability via action $a_i$ of the state $s_i$, where $p_i$ holds. Quantifiers are modelled directly as quantifiers over the agent's actions. 

As an example, consider a QBF $\forall p_2 \exists p_1 \exists p_3 (p_1 \to p_2) \land p_3$, which is clearly satisfiable with $p_1 = 0$ and $p_3 = 1$. The formula is translated into the formula of $\CSL$: $\forall x_2 \exists x_1  \exists x_3 (\assign{x_1}p_1 \to \assign{x_2}p_2) \land \assign{x_3}p_3$. The corresponding CGS is presented in Figure \ref{fig::hardness}, and it is easy to verify that $\G,s \models \forall x_2 \exists x_1  \exists x_3 (\assign{x_1}p_1 \to \assign{x_2}p_2) \land \assign{x_3}p_3$.

\begin{figure}[h!]
\centering
\scalebox{0.9}{
\begin{tikzpicture}
\node(-1) at (2,0) {};
\node[circle,draw=black, minimum size=4pt,inner sep=0pt,  label=left:{$s$}](1) at (0,0) {};
\node[circle,draw=black, minimum size=4pt,inner sep=0pt, , label=below:{$s_1$}, label = right:{$\{p_1\}$}](2) at (-2,-2) {};
\node[circle,draw=black, minimum size=4pt,inner sep=0pt, , label=below:{$s_2$}, label = right:{$\{p_2\}$}](3) at (0,-2) {};
\node[circle,draw=black, minimum size=4pt,inner sep=0pt, , label=below:{$s_3$}, label = right:{$\{p_3\}$}](4) at (2,-2) {};

\draw [->,thick](1) to node[left,align=left] {$a_1$} (2);
\draw [->,thick](1) to node[left,align=left] {$a_2$} (3);
\draw [->,thick](1) to node[right,align=left] {$a_3$} (4);
\draw [->,thick] (2) to [loop left]  (2);
\draw [->,thick] (3) to [loop left]  (3);
\draw [->,thick] (4) to [loop left]  (4);
\end{tikzpicture}
}
\caption{CGS $\G$ for a single agent. Labels for self-loops are omitted for readability.}
\label{fig::hardness}
\end{figure}
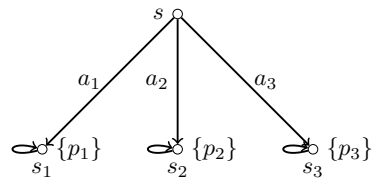 

\subsection*{Satisfiability}
The proof of undecidability of $\mathsf{SL}$ \cite{mogavero16} uses only the next-time fragment of $\mathsf{SL}$. Here we show how the proof can be adapted for the case of $\CSL$. For more details see the original proof in \cite{mogavero16}.

Given two variables $x_1$ and $x_2$, let $x_1 < x_2$ be the formula $\assign{x_1,y} p \land \assign{x_2,y}\neg p$. Define also $\varphi_{unbd}= \forall x_1 \exists x_2 (x_1 < x_2)$, $\varphi_{mn}= \exists x_2 \forall x_1 \neg (x_1 < x_2)$, and 
$\varphi_{trs}= \forall x_1 \forall x_2 \forall x_3 ((x_1 < x_2 ) \land  (x_2 < x_3 ))\imp (x_1 < x_3) $. Finally let 
    $\varphi_{<} =\varphi_{unbd}\land \varphi_{mn} \land \varphi_{trs}$. 

   \begin{proposition}
       Formula $\varphi_<$ is satisfiable.
   \end{proposition}
   \begin{proof}
       %For satisfiability. 
       Consider the $CGS$ $\G^\star$ with three states $s_0,s_1,s_3$, where $\Ac=\mathbb{N}$, and where we have that  $\tuple{s_0,k,j,s_1}\in R$ iff $k < j$, $\tuple{s_0,k,j,s_2}\in R$ iff $k\geq j$ and $\tuple{s_i,k,j,s_i}$ for $i=1,2$ and $k,j\in \mathbb{N}$. Moreover suppose that $\mathcal{V}(s_0)=\mathcal{V}(s_2)=\emptyset$ and $\mathcal{V}(s_1)=\set{p}$. It is easy to verify that $\G^\star,s_0 \models \varphi_<$. 
   \end{proof}

   In the proof of the fact that $\varphi_<$ is satisfiable, we constructed a model over an infinite number of actions. We can now show that we cannot do with less, i.e. every model of $\varphi_<$ necessarily possesses an infinite number of actions. 

   \begin{lemma}
       Let $\G$ be a $CGS$ whose set of actions is $\Ac$, and $s$ is one of its states. Define a relation $\prec \subseteq \Ac \times \Ac$ by $\tuple{a,b}\in \prec $ iff $\G,s\models x_1 < x_2[a/x_1, b/x_2]$. If $\G,s\models \varphi_{<}$, then relation $\prec$ is a strict partial order without the maximal element, and hence $\Ac$ is infinite. 
   \end{lemma}
   \begin{proof}
     That $\prec$ is transitive and unbounded immediately follows from the fact that $\G,s\models \varphi_<$. The fact that no maximal element exists for $\prec$ follows from the irreflexivity of $\prec$ %of this latter relation: 
     Indeed, suppose that $\tuple{a,a}\in \prec$ for some action $a$. This means that $\G,s\models \assign{a,b} p \land \assign{a,b}\neg p$ for some $b$, i.e., for the unique $s'$ s.t. $\tuple{s,a,b,s'}\in R$ we have  that $\G,s'\models p\ \land \neg p$, which is a contradiction. %clearly impossible.  
     Since $\prec$ is a strict partial order without maximal elements on $\Ac \times \Ac$, we directly obtain that $\Ac$ must be infinite. 
   \end{proof}

Define $a\equiv b$ iff neither $a\prec b$ nor $b\prec a$. It is easy to see that if $\G,s\models \varphi_<$, then $\equiv$ is an equivalence relation on the set of actions of $\G$. Let us denote by $[a]$ the equivalence class of $a\in \Ac$ generated by $\equiv$. We define a  relation $\prec^\equiv$ on equivalence classes of actions by putting $[a]\prec^\equiv [b]$ iff for every $a\in [a]$ and every $b\in [b]$,  $a\prec b$ holds. We can now show the following property of $\prec^\equiv$. %We can prove the following quite easily. 

\begin{lemma}
   Suppose that $\G,s\models \varphi_<$, then the relation $\prec^\equiv$ on the class of equivalence classes generated by $\equiv$ is a strict total order with the minimal element and no maximal element. 
\end{lemma}
\begin{proof}
    Transitivity, irreflexivity and the non-existence of a maximal element are immediate.   We show that $\prec^\equiv$ is total and admits a minimal element. For totality, we argue by contradiction: assume that $[a]$ and $[b]$ are two different incomparable equivalent classes. By definition, this means that there are $a,a'\in [a]$ and $b,b'\in [b]$ such that $a \not \prec b$ and $b'\not \prec a'$. This means that $b\prec a$ and $a' \prec b'$ must hold. Since $a$ and $b'$ are in different equivalence classes, %they must be comparable. W.l.o.g. 
    w.l.o.g. we can assume then that $a\prec b'$, which implies $b\prec b'$ by transitivity. But this is impossible since $b,b'\in [b]$. 
    
    Now suppose that there is no minimal element for $\prec^\equiv$. This implies that in $\G$ for every action $a \in [a]$ we can find an action $b \in [b]$ such that  such that $\G,s\models x_1 < x_2 [b/x_1, a/x_2]$ and thus $\G,s\models \forall x_2 \exists x_1 (x_1 < x_2)$. But this contradicts the conjunct $\varphi_{mn}$ of $\varphi_<$, i.e. $\G,s\models \exists x_2 \forall x_1 \neg (x_1 < x_2)$. 
\end{proof}
   
Having defined appropriate relations, we can use them to capture the construction of a tiling.
 Given a finite set $D$ of domino types and two relation $H,V\subseteq  D\times D$ the $\mathbb{N}\times \mathbb{N}$, \textit{the domino tiling problem} consist in finding a mapping $T: \mathbb{N}\times \mathbb{N} \to D$ such that for every $x,y\in \mathbb{N}$ it holds that $T(x,y)\in H$ implies $T(x+1,y)\in H$, and $T(x,y)\in V$ implies $T(x,y+1)\in V$. We can now use the reduction from the tiling problem to show the undecidability of the satisfiability problem for $\CSL$.
 %now show that this undecidable problem can be reduced to the satisfiability of a $\CSL$ formula. 

 Given $x_1$ and $x_2$, we define the two formulae $x_1 <_H x_2 = \exists y \assign{x_1,y} p \land \assign{x_2,y} \neg p$  and $x_1<_V x_2= \exists y \assign{y,x_1} p \land \assign{y,x_2}$. For $X\in \set{H,V}$, let $\varphi^X_{<}$ be defined symilarly to $\varphi_<$.
 Finally, let $\varphi_{grd}$ be the conjunction of $\varphi^H_<$ and $\varphi^V_<$. Intuitively, $\varphi^{grd}$ enforces the horizontal and vertical orderings of the positions in a grid.  Evidently, the sentence $\varphi^{grd}$ is satisfiable and every of its model has a countable number of actions. 
 
 By defining, for $X\in \set{H,V},$ $a\prec_X b$  similarly to $a\prec b$, we obtain that $\prec_X$ is a strict partial order with no maximal element on the set of actions on every model $\G$ of $\varphi^{grd}$. Let $\equiv_X$ for $X\in \set{H,V}$ be the equivalence relation on actions of $\G$ generated as $\tuple{a,a'}\in \equiv_X$ iff neither $a \prec_X a'$ nor $a'\prec_X a$. Then, by denoting $\Ac^\equiv_X$ the class of equivalence classes modulo $\equiv_X$, we get that $\prec^\equiv_X$ is a strict total order with the minimal element and no maximal element. Since $\Ac^\equiv_X$ is countable, it follows that $\tuple{\Ac^\equiv_X,\prec^\equiv_X}$ is a well-ordered set. We denote by $[a]^X_i$ the $i$-th element of $\Ac^\equiv_X$ w.r.t. $\prec^\equiv_X$. 
 
 Let $\mathcal{N}: \mathbb{N}\times \mathbb{N} \to \Ac^\equiv_H \times \Ac^\equiv_V$ be a mapping that maps any pair $\tuple{i,j}$ of natural numbers to the pair $\tuple{[a]^H_i, [a]^V_j}$.  
 We can define the successor relation %$S^X(x_1,x_2)$ as $(x_1 \prec_X x_2) \land \forall x_3 (\neg (x_3 \prec_X x_2) \land \neg (x_1 \prec_X x_3))$ 
 $S^X(x_1,x_2)$ as $(x_1 \prec_X x_2) \land \forall x_3 \neg((x_3 \prec_X x_2) \land (x_1 \prec_X x_3))$
 for $X\in \set{H,V}$. In such a way we ensure that if $\G,s\models \varphi_{grd}$ then $\G,s\models S^X(x_1,x_2)[a/x_1,a'/x_2]$ iff $a\in [a]^X_i$ and $a'\in [a]^X_{i+1}$. 
 
 Having defined the successor relation, we can now express the local compatibility of a tiling  $\varphi^{loc,t}$, i.e. that each tile has only one type, as well as horizontal and vertical requirements of a tiling $\varphi^{t,H}$ and $\varphi^{t,V}$.

 \begin{enumerate}
     \item $\varphi^{loc,t}= \assign{x,y}(t\land^{t'\neq t}_{t'\in D} \neg t')$
     \item $\varphi^{t,H}= \bigvee_{\tuple{t,t'}\in H} (\forall x (S^H (x,x') \imp \assign{x',y} t'))$
     \item $\varphi^{t,V}= \bigvee_{\tuple{t,t'}\in V} ( \forall y (S^V (y,y') \imp \assign{x,y'} t'))$
 \end{enumerate}

 Finally, let $\varphi^{tile}$ be $\forall x \forall y ( \varphi^{loc,t} \land \varphi^{t,H} \land \varphi^{t,V})$, and let the formula for the tiling problem be $\varphi^{dom}:= \varphi^{grd} \land \varphi^{tile}$. 

\begin{theorem} The satisfiability problem for $\CSL$ is undecidable. 
\begin{proof}
    To obtain the result one shows that there is a reduction from  the $\mathbb{N}\times \mathbb{N}$ tiling problem to the satisfiability problem of $\CSL$. Having defined all the necessary formulae above, the proof goes similarly to \cite[Theorem 3.10]{mogavero16}.
\end{proof}
    
\end{theorem}

\end{document}